\newcounter{researchquestion}
\newtcolorbox{researchquestionbox}[1][]{%
  colback=white, 
  colframe=black!75!black,
  fonttitle=\bfseries,
  title=Challenge \arabic{researchquestion}:,
  before upper={\stepcounter{researchquestion}},
  #1
}
\def\BibTeX{{\rm B\kern-.05em{\sc i\kern-.025em b}\kern-.08em
    T\kern-.1667em\lower.7ex\hbox{E}\kern-.125emX}}
\newtheorem{theorem}{Theorem}
\newtheorem{definition}{Definition}[section]
\def\BibTeX{{\rm B\kern-.05em{\sc i\kern-.025em b}\kern-.08em
    T\kern-.1667em\lower.7ex\hbox{E}\kern-.125emX}}
\begin{document}

% \title{Who influences Consensus? \\ Exploring Blockchain Decentralization}
\title{How Does Stake Distribution Influence Consensus? Analyzing Blockchain Decentralization }
\author{\IEEEauthorblockN{Shashank Motepalli,
Hans-Arno Jacobsen}
\IEEEauthorblockA{Department of Electrical and Computer Engineering,
University of Toronto\\
shashank.motepalli@mail.utoronto.ca,
jacobsen@eecg.toronto.edu}}
% \title{How does stake distribution impact consensus decentralization?}

% \author{\IEEEauthorblockN{1\textsuperscript{st} Given Name Surname}
% \IEEEauthorblockA{\textit{dept. name of organization (of Aff.)} \\
% \textit{name of organization (of Aff.)}\\
% City, Country \\
% email address or ORCID}
% \and
% \IEEEauthorblockN{6\textsuperscript{th} Given Name Surname}
% \IEEEauthorblockA{\textit{dept. name of organization (of Aff.)} \\
% \textit{name of organization (of Aff.)}\\
% City, Country \\
% email address or ORCID}
% }

\maketitle
\vspace{-16pt}
\begin{abstract}
In the PoS blockchain landscape, the challenge of achieving full decentralization is often hindered by a disproportionate concentration of staked tokens among a few validators. This study analyses this challenge by first formalizing decentralization metrics for weighted consensus mechanisms. An empirical analysis across ten permissionless blockchains uncovers significant weight concentration among validators, underscoring the need for an equitable approach. To counter this, we introduce the Square Root Stake Weight (SRSW) model, which effectively recalibrates staking weight distribution. Our examination of the SRSW model demonstrates notable improvements in the decentralization metrics: the Gini index improves by 37.16\% on average, while Nakamoto coefficients for liveness and safety see mean enhancements of 101.04\% and 80.09\%, respectively. This research is a pivotal step toward a more fair and equitable distribution of staking weight, advancing the decentralization in blockchain consensus mechanisms.
\end{abstract}

\section{Introduction}
Bitcoin shaped the field of blockchains by introducing a peer-to-peer system that operates without trusted intermediaries~\cite{nakamoto2008bitcoin}. Its vision encapsulates \textit{decentralization}, characterized by the elimination of single points of failure and the facilitation of collective decision-making. Our work focuses on exploring decentralization in blockchains, particularly in their consensus mechanisms. These mechanisms are critical as they establish agreement on the content and order of transactions among validators.

The problem this paper addresses is the \textit{analysis and advancement of decentralization in consensus mechanisms}. This problem is particularly interesting because, although decentralization is fundamental to every blockchain, standardized metrics to quantify it within consensus are lacking. This gap, coupled with the technical complexity of consensus algorithms, complicates the analysis of blockchain systems in practice. Moreover, enhancing decentralization in consensus mechanisms is of significance, as it directly contributes to the trust and safety in blockchain systems.

Our approach to examining decentralization begins with a systematic classification of consensus mechanisms based on finality, as detailed in Section~\ref{sec:consensusclassification}. In this paper, we specifically focus on classical consensus mechanisms, leveraging their extensive research in distributed computing~\cite{lamport2019byzantine,castro1999practical}. Subsequently, we delve into the concept of weighted consensus wherein validators could have unequal influence in consensus, a framework highly relevant in Proof of Stake (PoS) and Delegated Proof of Stake (DPoS) systems. In these systems, the weighting is a function of the tokens staked by validators (or stakers)~\cite{david2018ouroboros,buchman2016tendermint,saad2020comparative}. 

Drawing inspiration from quantitative decentralization metrics in decentralized autonomous organizations (DAOs) with token-based voting~\cite{austgen2023dao,sharma2023unpacking,tan2023open}, our work extends these studies to consensus mechanisms in Section~\ref{sec:metrics}. We adapt and refine these metrics to effectively measure decentralization in consensus mechanisms. Specifically, we evaluate  cardinality, Gini index, and Nakamoto coefficients for safety and liveness for given set of validators.

Utilizing metrics specifically designed for evaluating decentralization in consensus mechanisms, our research carries out an empirical analysis of ten prominent blockchains, as elaborated in Section~\ref{sec:empiricalanalysis}. This analysis encompasses Aptos~\cite{aptos}, Axelar~\cite{axelar}, BNB~\cite{bnb}, Celestia~\cite{celestia}, Celo~\cite{celo}, Cosmos~\cite{cosmos}, Injective~\cite{injective}, Osmosis~\cite{osmosis}, Polygon~\cite{polygon}, and Sui~\cite{sui}. The findings reveal a notable concentration of weight among few validators, which poses significant concerns for the security and integrity of blockchain systems. To address this challenge, we propose \textit{the square root stake weight (SRSW)} function, outlined in Section~\ref{sec:srsw-model}. Unlike solutions such as the introduction of virtual stake~\cite{mivsic2023towards}, our approach leverages existing staked tokens while redefining their weights, thereby avoiding the introduction of new security vulnerabilities. Our proposed approach, thoroughly evaluated in Section~\ref{sec:evaluation}, demonstrates substantial potential in enhancing decentralization and, consequently, the overall reliability of blockchain systems. The paper concludes with a discussion of related work in Section~\ref{sec:relatedwork}, and identifies avenues for future research in Section~\ref{sec:conclusion}.

Our contributions are four-fold: 
\vspace{-2pt}
\begin{enumerate}
    \item Adapt and formalize decentralization metrics specifically for consensus mechanisms.
    \item We empirically demonstrate the challenge of weight concentration and its impact on decentralization in prominent blockchains. 
    \item We introduce the square root stake weight (SRSW) mechanism, to mitigate weight concentration challenge. 
    \item Our work pioneers the use of data-driven analysis in consensus research, offering novel insights and solutions to enhance decentralization in blockchain systems.
\end{enumerate}
\vspace{-5pt}
\section{Consensus Mechanism Foundations and Classification}
\label{sec:consensusclassification}
This section delineates the consensus mechanisms and validator set selection process, setting the stage for the empirical analysis of decentralization in blockchains.
\subsection{Classification of Consensus Mechanisms Based on Finality}
A blockchain comprises a ledger that has blocks of transactions. The consensus mechanism orchestrates the process of reaching agreement on the content and order of blocks within the ledger~\cite{zhang2022reaching}. Agreement is reached among a designated set of validators, which in different nomenclatures can be referred to as miners~\cite{nakamoto2008bitcoin}, witnesses~\cite{li2020comparison}, or sequencers~\cite{motepalli2023sok}. A consensus mechanism is deemed Byzantine Fault Tolerant (BFT) if it withstands a certain proportion of validators with malicious behaviour, in addition to crash failures~\cite{lamport2019byzantine}.

Two properties are guaranteed by consensus mechanisms: \textit{safety}, ensuring all correct validators agree on the same content and order of blocks, and \textit{liveness}, ensuring the continual production of new blocks without indefinite delays~\cite{pass2017analysis,garay2015bitcoin}. Building upon the safety property, we introduce the concept of finality, also known as commitment. Finality of a block \( b \) at time $t$, denoted \( 0 \leq f(b, t) \leq 1\), indicates the probability with which the block has been appended to the ledger~\cite{anceaume2020finality}. When \( f(b, t) =1 \), it signifies \textit{total finality}, i.e., the block $b$ cannot be reverted or abandoned. Achieving this level of finality is essential for the immutability of the ledger. There are two distinct ways to realize finality in consensus mechanisms:
\vspace{-3pt}
\begin{definition}
\label{def:absolute-finality}
\textbf{Absolute Finality} is achieved when a block \( b \) is appended to the ledger at time $t_0$ and becomes irreversible instantly, such that \( f(b, t) = 1\quad\text{for all } t > t_0 \).
\end{definition}
\vspace{-5pt}

\begin{definition}
\label{def:probabilistic-finality}
\textbf{Probabilistic Finality} is achieved when the finality of an appended block \( b \) at time \( t_0 \) is expressed as \( f(b, t_0) = 1-\gamma \), where \( \gamma \) is a non-negative value less than 1 (i.e., \( 0 \leq \gamma < 1\)) that represents the deviation from total finality. Specifically, for any two points in time \( t_1 \) and \( t_2 \) such that \( t_2 > t_1 \), it follows that \( f(b, t_1) < f(b, t_2) \). As time progresses, \( f(b, t) \) gradually converges to 1 as \( \gamma \) approaches zero. 
\begin{equation}
\lim_{\gamma \to 0} f(b, t) = 1
\vspace{-6pt}
\end{equation}
\end{definition}

Based on how consensus mechanisms achieve finality, they can be categorized into two types: Nakamoto-style and classical consensus, as shown in Table~\ref{tab:types-consensus-mechanisms}.
\begin{table}[]
\caption{Consensus mechanisms classified based on finality}
\label{tab:types-consensus-mechanisms}
\renewcommand{\arraystretch}{1.5}
\centering
\begin{tabular}{|l|l|l|}
\hline
\multicolumn{1}{|c|}{} & \multicolumn{1}{c|}{\textbf{Classical consensus}}                     & \multicolumn{1}{c|}{\textbf{Nakamoto-style}}                          \\ \hline \hline
\textbf{Finality}      & \begin{tabular}[c]{@{}l@{}}Absolute and \\ instant\end{tabular} & \begin{tabular}[c]{@{}l@{}}Probabilistic and \\ eventual\end{tabular} \\ \hline
\textbf{Priniciple}    & Safety over liveness                                                  & Liveness over safety                                                  \\ \hline
\textbf{Attestation}   & Quorum of validators                                                  & Only the proposer                                                     \\ \hline
\textbf{Resources}     & Requires a priori knowledge                                            & No constraints                                                        \\ \hline
\textbf{Communication} & Supports partial synchrony                                            & Synchronous                                                           \\ \hline
\textbf{Examples}      & \begin{tabular}[c]{@{}l@{}}PBFT~\cite{castro1999practical}, HotStuff~\cite{yin2019hotstuff}, \\ Tendermint~\cite{buchman2018latest,buchman2016tendermint}\end{tabular} & \begin{tabular}[c]{@{}l@{}}Nakamoto~\cite{nakamoto2008bitcoin},  \\Ouroborus~\cite{david2018ouroboros}\end{tabular}        \\ \hline
\textbf{In practise}   & Diem~\cite{baudet2019state}                                                                  & Bitcoin~\cite{nakamoto2008bitcoin}                                                               \\ \hline
\end{tabular}
\vspace{-16pt}
\end{table}

The Nakamoto-style consensus embodies probabilistic finality, meaning that the system eventually approaches total finality with time~\cite{kim2023taxonomic}. This style is used in Bitcoin, where a block is considered to have reached finality after the confirmation of 6 subsequent blocks, approximately an hour~\cite{nakamoto2008bitcoin}. Nakamoto-style consensus mechanisms prioritize liveness over safety, ensuring the continual production of new blocks; however, the ledger's order remains susceptible to forking~\cite{garay2015bitcoin},  i.e., the current order of the ledger may be altered, until time \( t \).

Conversely, classical consensus mechanisms prioritize safety over liveness. In this style, no blocks are appended to the ledger until absolute finality is achieved, rendering finality deterministic and immediate~\cite{kim2023taxonomic}. An example is the PBFT consensus~\cite{castro1999practical}, where a designated proposer, one of the validators, broadcasts a block and, absolute finality is achieved when a quorum of validators attests on the proposed block and the block is appended to the ledger.

A notable distinction between these styles also lies in the block generation process. In Nakamoto-style consensus, a single proposer is responsible for proposing a block. This design doesn't presuppose knowledge of resources, such as the hash power in PoW, and assumes synchronous communication, i.e., messages are broadcast within a bounded time~\cite{lewis2021does}. On the other hand, classical consensus protocols assume a priori knowledge of the total available resources~\cite{buchman2016tendermint}, such as the stake distribution of validators in PoS. The notion of a quorum $\mathbb{Q}$, facilitated through certificates of attestation, necessitates having finite resources, a topic explored further in the subsequent subsection.

For the scope of this work, our focus is classical consensus mechanisms for multiple reasons. Primarily, these mechanisms facilitate fast finality along with high performance in terms of throughput and latency, compared to Nakamoto-style consensus~\cite{yin2019hotstuff,buchman2016tendermint}. Secondly, the employment of quorum $\mathbb{Q}$ certificates enables these protocols to function effectively in a partially synchronous environment, thereby tolerating indefinite periods of asynchrony~\cite{lewis2021does}. Thirdly, classical consensus mechanisms have undergone rigorous examination over several decades~\cite{zhang2022reaching}, with seminal contributions such as Raft~\cite{ongaro2014search} and PBFT~\cite{castro1999practical}, finding applications in safety-critical domains such as aviation systems~\cite{wensley1978sift,siewiorek2005fault}.

In the subsequent sections, the discussion extends to Sybil resistance and the intricacies of weighted classical consensus.

\subsection{Transition to Weighted Consensus}
Traditional classical consensus mechanisms, such as PBFT~\cite{castro1999practical}, HotStuff~\cite{yin2019hotstuff}, PrestigeBFT~\cite{zhang2023prestigebft}, and SBFT~\cite{gueta2019sbft}, are designed to be able to tolerate up to one-third of the validator set being faulty, where a faulty validator may exhibit malicious behavior or be offline. In these mechanisms, a designated block proposer is required to collect attestations from the validator set to form a quorum certificate. Let the validator set be represented as \( N = \{n_1, n_2, \ldots, n_m \} \) where $n_i$ represents a validator. A quorum certificate is formed with attestations from at least a super-majority of validators, denoted as \( \mathbb{Q} \), such that:
\vspace{-2pt}
\begin{equation}
\mathbb{Q} \geq \left(\frac{2}{3}\right) m \text{, where } m = |N|
\vspace{-2pt}
\end{equation}
Note that while we assume the protocol can be able to tolerate up to one-third of the total validators being faulty, some protocols may have different failure assumptions~\cite{miller2016honey}. In such cases, $\mathbb{Q}$ must be adjusted accordingly.

Classical BFT consensus mechanisms were initially conceived for permissioned systems, where the identities of all validators are established. When deployed in permissionless environments with (pseudo)anonymous identities, these mechanisms become susceptible to Sybil attacks, wherein a malicious actor could create multiple validators to subvert the consensus process~\cite{douceur2002sybil}. To mitigate this vulnerability and achieve Sybil resistance without relying on trusted intermediaries, Algorand~\cite{gilad2017algorand}, Ouroboros~\cite{david2018ouroboros} and Tendermint~\cite{buchman2016tendermint} pioneered PoS mechanism. In PoS, validators stake the native tokens of the system as a means of establishing their identity. These tokens are subject to penalization if validators engage in malicious behavior~\cite{motepalli2021reward}. Given that the native tokens are finite and the security of the consensus impacts the tokens' market value, validators are rationally incentivized to act correctly, thus enhancing the system's security. The development of PoS protocols, characterized by variably staked tokens, paves the way for the adoption of weighted consensus.

\textit{Weighted consensus} encompasses traditional classical consensus as a subset, where traditional models are effectively a special instance with uniform weights across validators. In weighted consensus, validators have varying weights in the consensus process~\cite{edwardThesis}. In practice, in PoS/DPoS blockchains such as Cosmos~\cite{buchman2016tendermint}, the influence of a validator \( n_k \) in the consensus is quantified by their weight, \( w_k > 0\), which is a function of their staked tokens \( s_k\).
\vspace{-2pt}
\begin{equation}
    w_k = s_k
\vspace{-2pt}
\end{equation}
Unlike traditional classical consensus mechanisms where a quorum ceritifcate is achieved based on the absolute number of validators, in weighted consensus, a quorum necessitates garnering two-thirds of the total weight. The quorum certificate for weighted consensus for a validator set $N$ is denoted as \( \mathbb{Q}' \), is given below:
\vspace{-2pt}
\begin{equation}
\mathbb{Q}' \geq \left( \frac{2}{3}\right) \sum_{k} w_k \quad \forall k \in N
\vspace{-2pt}
\end{equation}
 Moreover, higher weight could also imply higher rewards or a higher probability of being selected as a block proposer~\cite{david2018ouroboros}. The evolution from traditional to weighted consensus, underscored by PoS/DPoS, is a pivotal adaptation to suit permissionless blockchain systems. We focus on weighted consensus in the rest of this work.

\subsection{Validator Set Selection}
Classical weighted consensus assumes that resources, such as total staked tokens in PoS, are finite and known a priori. These staked tokens are used to rank candidates interested in becoming validators and to choose the validator set. Various mechanisms exist for validator set selection, including PoS~\cite{gilad2017algorand}, DPoS~\cite{saad2020comparative}, delay towers~\cite{motepalli2022decentralizing}, and reputation mechanisms~\cite{de2018pbft}. This work does not make specific assumptions regarding the mechanism of validator set selection; instead, it focuses on the validators' engagement in the consensus mechanism.

The validator set typically remains fixed for a specified time interval, known as an \textit{epoch}. Following each epoch, a new validator set is selected through a \textit{reconfiguration} event~\cite{duan2022foundations,motepalli2022decentralizing}. Reconfiguration tends to consider updated stakes and involves eliminating faulty validators. 

It is also important to acknowledge that some blockchains, such as Algorand~\cite{gilad2017algorand}, use mechanisms like random sortation to randomly select a subset of candidates as validators every epoch. Our study focuses on systems where the validator set is deterministically defined, thereby excluding blockchains that employ random committee selection processes.

\section{Consensus Decentralization Metrics}
\label{sec:metrics}
In consensus mechanisms, \textit{decentralization} means reaching agreement on the contents and order of transactions without centralized control, ensuring that no single validator or group of validators dominates the process. While challenging to precisely define~\cite{schneider2003decentralization,sharma2023unpacking,kiayias2022sok}, decentralization is essential for consensus mechanisms, as it underpins trust in the blockchain systems.

Our discussion draws inspiration from the \( (m,\varepsilon,\delta) \)-decentralization model described in ``Impossibility of Full Decentralization in Permissionless Blockchains''~\cite{kwon2019impossibility}. Here, \( m \) indicates the cardinality of the validator set, 
and \( \varepsilon \) represents the weight disparity between the most influential (richest) 
and the \( \delta \)-th percentile validator. The ideal case is full decentralization, expressed as \( (m,0,0) \) for a sufficiently large \( m \), that occurs when all validators have equal influence. While the \( (m,\varepsilon,\delta) \)-model captures the essence of decentralization, 
it lacks quantifiable metrics for comparing the decentralization of different blockchains. Therefore, we introduce additional metrics, as shown in Table~\ref{tab:metrics_symbols}, to effectively quantify and compare decentralization across different blockchains.

\subsubsection{\textbf{Validator Set Cardinality} (\( m \))}
\begin{description}[leftmargin=0em, labelwidth=0em, align=left, font=\bfseries]
    \item[Description:] Represents the number of validators (\( m = |N| \)) in the consensus mechanism.
    \item[Inference:] A higher \( m \) suggests better decentralization, aligning with the \( (m,\varepsilon,\delta) \)-model.
    \item[Limitations:] In weighted consensus, \( m \) alone may not reflect true decentralization. For example, if \( m=1000 \) but one validator holds 90\% of the total weight, it contradicts the decentralization ideal.
\end{description}

\subsubsection{\textbf{Gini Coefficient} (\( G \))}
\begin{description}[leftmargin=0em, labelwidth=0em, align=left]
    \item[Description:] The Gini coefficient (\( G \)) measures wealth inequality, commonly used in socioeconomic studies~\cite{ceriani2012origins,gini1921measurement,sitthiyot2020simple}. In consensus mechanisms, it assesses validators' influence disparity, indicating deviation from \( (m,0,0) \)-decentralization.

   \( G \) is calculated using the Lorenz curve, which graphically elucidates the weight distribution among validators~\cite{gastwirth1972estimation}. The Lorenz curve plots the cumulative share of validators (sorted by their weight) on the X-axis against the cumulative share of their weight on the Y-axis. The formula for \( G \) is:
   \vspace{-8pt}
    \begin{equation}
        G = 1 - \frac{2 \times B}{A + B}
    \end{equation}
    where \( B \) is the area between the Line of Equality and the Lorenz Curve, and \( A \) is the area beneath the Lorenz Curve. The line of equality illustrates a hypothetical scenario of equal weight distribution, while the Lorenz curve depicts the actual distribution of weights~\cite{sitthiyot2021simple}. The area between these two curves represents the extent of inequality in the weight distribution~\cite{gastwirth1972estimation}.
        
    \item[Inference:] \( G \) ranges between 0 and 1, with 0 indicating equitable distribution (higher decentralization) and values closer to 1 indicating concentration of weight (lower decentralization).
    \item[Limitations:] \( G \) alone may not fully capture decentralization, as it does not account for validator set cardinality \( m \). For example, a system with a single validator (\( m =1\)) would have \( G =0 \), yet be highly centralized.
\end{description}
\begin{table}[]
\renewcommand{\arraystretch}{1.5}
\centering
\caption{Decentralization metrics for consensus }
\label{tab:metrics_symbols}
\begin{tabular}{|l|l|l|l|}
\hline
\textbf{Symbol} & \textbf{Metric}                                                          & \textbf{Range}                                                                            & \textbf{Ideal} \\ \hline \hline
\( m \)               & Validator set cardinality                                                & \( m > 0\)                                                                          & higher        \\ \hline
\( G \)               & Gini Index                                                               &  \( 0 \leq G \leq 1\)                                                                & lower          \\ \hline
\( \mathbb{N}_L, \rho_{\mathbb{N}_L} \)             & \begin{tabular}[c]{@{}l@{}}Nakamoto Coefficient \\ for Liveness\end{tabular} & \begin{tabular}[c]{@{}l@{}}\( \mathbb{N}_L \geq 0 \)\\ \( 0\leq \rho_{\mathbb{N}_L} \leq 100\)\end{tabular} & higher         \\ \hline
\( \mathbb{N}_S, \rho_{\mathbb{N}_S} \)               & \begin{tabular}[c]{@{}l@{}}Nakamoto Coefficient\\ for Safety\end{tabular}    & \begin{tabular}[c]{@{}l@{}}\( \mathbb{N}_S \geq 0 \)\\ \( 0\leq \rho_{\mathbb{N}_S} \leq 100\)\end{tabular} & higher        \\ \hline
\end{tabular}
\vspace{-16pt}
\end{table}

\subsubsection{\textbf{Nakamoto Coefficient - Liveness} (\( \mathbb{N}_L, \rho_{\mathbb{N}_L} \))}
\begin{description}[leftmargin=0em, labelwidth=0em, align=left]
    \item[Description:] The Nakamoto coefficient for liveness (\( \mathbb{N}_L, \rho_{\mathbb{N}_L} \)) quantifies the minimum number of validators needed to disrupt the block production or censor transactions~\cite{ofacSanctioned,balajidecentralization,censorshipData}. In other words, \( \mathbb{N}_L \) indicates the fault tolerance in weighted consensus, i.e., cardinality of the smallest subset of validators ($L$) whose cummulative weight is at least one-third of the total weight. The formula is:
    \begin{equation}
    \mathbb{N}_L = \min\{|L| \,|\, L \subseteq N, \sum_{i \in L} w_i \geq \frac{1}{3} \sum_{i=1}^{m} w_i\}
    \end{equation}
    The scaled Nakamoto coefficient \( \rho_{\mathbb{N}_L} \) is then calculated as:
    \begin{equation}
    \rho_{\mathbb{N}_L} = \frac{\mathbb{N}_L}{m} * 100
    \end{equation}
    This normalization facilitates comparison across blockchains of varying sizes. Unlike the \( (m,\varepsilon,\delta) \)-decentralization model, which focuses on the richest validator, \( \rho_{\mathbb{N}_L} \) considers the weight distribution across the top one-third of validators relative to the entire set, offering a broader view of the system's decentralization.
    
    \item[Inference:] A higher \( \mathbb{N}_L \) suggests better decentralization. The normalized value \( \rho_{\mathbb{N}_L} \), ranging between 0 and 100. A \( \rho_{\mathbb{N}_L} \) closer to 100 indicates high decentralization and resilience against censorship~\cite{ofacSanctioned}.
    \item[Limitations:]  \( \mathbb{N}_L \) relies on correct majority of validators. Any collusion among validators may distort \( \mathbb{N}_L \), rendering it an inaccurate measure of decentralization. 
\end{description}

\subsubsection{\textbf{Nakamoto Coefficient - Safety} (\( \mathbb{N}_S, \rho_{\mathbb{N}_S} \))}
\begin{description}[leftmargin=0em, labelwidth=0em, align=left]

    \item[Description:] Safety relies on finality—ensuring once blocks are appended, they become immutable. Safety compromises have severe consequences on the ledger's integrity, such as ledger re-ordering or loss of funds~\cite{li2023hard,sridhar2023better}. The Nakamoto Coefficient for Safety (\( \mathbb{N}_S, \rho_{\mathbb{N}_S} \)) quantifies the minimum subset of the validators (\( S \)) required to compromise safety~\cite{balajidecentralization,pass2017analysis}, i.e., \( \mathbb{N}_S \) is the cardinality of the smallest subset of validator set whose combined weight can form a quorum \( \mathbb{Q}' \):
    \begin{equation}
    \mathbb{N}_S = \min\{|S| \,|\, S \subseteq N, \sum_{i \in S} w_i \geq \mathbb{Q}' \}
    \end{equation}
    The scaled form of \( \mathbb{N}_S \), denoted by \( \rho_{\mathbb{N}_S} \), is:
    \begin{equation}
    \rho_{\mathbb{N}_S} = \frac{\mathbb{N}_S}{m}
    \end{equation}
    In conjunction with (\( \mathbb{N}_L, \rho_{\mathbb{N}_L} \)), the (\( \mathbb{N}_S, \rho_{\mathbb{N}_S} \)) metric complements the \( (m,\varepsilon,\delta) \)-decentralization framework by quantifying the concentration of weight that affects system safety.
    \item[Inference:] Higher values of \( \mathbb{N}_S \) and \( \rho_{\mathbb{N}_S} \) indicate a more decentralized system. Specifically, \( \rho_{\mathbb{N}_S} \) close to 100, especially in systems with a large validator set (\( m \)), signifies robust decentralization.

    \item[Limitations:] The assumption that validators in \( \mathbb{N}_S \) calculations are non-colluding may not reflect real-world scenarios, potentially limiting its accuracy for safety evaluation. Moreover, reducing a system's safety to a single metric like \( \mathbb{N}_S \) risks oversimplifying safety complexities~\cite{wang2021ethereum}.

\end{description}

\subsubsection{Summary}
In this section, we introduced various metrics to quantify decentralization in consensus mechanisms. Despite the limitations acknowledged, the synergy of these metrics holistically captures the essence of decentralization, aligning with the \( (m,\varepsilon,\delta) \)-decentralization model. We leverage these metrics to quantify the decentralization of existing blockchain systems in practice in the subsequent section.

\section{Empirical Data Analysis}
\label{sec:empiricalanalysis}
\begin{table*}[htp]
\vspace{-12pt}
\caption{Decentralization metrics for blockchains, as of 14 December 2023. Ideal values of G and \( \varepsilon\) are 0.}
\centering
\label{tab:metrics}
\renewcommand{\arraystretch}{1.5}
\begin{tabular}{|l|l|l|l|l|l|l|l|l|}
\hline
\multicolumn{1}{|c|}{} & \textbf{Application} & \textbf{Consensus Mechanism} & \textbf{\( m \)} & \( G \) & \( \rho_{\mathbb{N}_L} (\mathbb{N}_L) \) & \( \rho_{\mathbb{N}_S} (\mathbb{N}_S) \) & \textbf{\( \varepsilon \) in \( (m,\varepsilon,0) \)} & \( (m,\varepsilon,50) \) \\ \hline \hline
\textbf{Aptos}         & L1 blockchain~\cite{aptos}        & Jolteon/DiemBFT~\cite{aptosBlockchain,gelashvili2022jolteon} & 144        & 0.56       & 12.50 (18)        & 12.50 (38)        & 8.488454e+11 & 7.63       \\ \hline
\textbf{Axelar}        & Interoperability~\cite{axelar}     & Tendermint~\cite{axelarBlockchain}         & 75         & 0.41       & 13.33 (10)        & 37.33 (28)        & 7.796480e+03 & 5.01        \\ \hline
\textbf{BNB (Binance)} & L1 blockchain~\cite{bnb}        & Tendermint~\cite{bnbBlockchain}         & 57         & 0.55       & 14.04 (8)         & 28.07 (16)        & 1.595114e+05 & 8.41        \\ \hline
\textbf{Celestia} & Data availability~\cite{celestia}        & Tendermint~\cite{celestiaDocs}         & 174         & 0.83       & 2.87 (5)         & 8.62 (15)        & 3.836768e+10 & 88.86        \\ \hline
\textbf{Celo}          & L2* (L1 blockchain)~\cite{celo}                & IstanbulBFT~\cite{celoBlockchain}        & 84         & 0.40       & 11.90 (10)        & 39.29 (33)        & 1.293101e+10 & 3.90        \\ \hline
\textbf{Cosmos}        & L1/interoperability~\cite{cosmos}  & Tendermint~\cite{cosmosBlockchain}         & 180& 0.69       & 3.89 (7)          & 13.33 (24)        & 2.470500e+02 & 60.63       \\ \hline
\textbf{Injective} &  DeFi/interoperability~\cite{injective} & Tendermint~\cite{injectiveBlog}         & 60        & 0.49       & 8.33 (5)          & 30.00 (18)        & 3.158000e+01 & 8.08       \\ \hline
\textbf{Osmosis}       & DeFi/DEX~\cite{osmosis}             & Tendermint~\cite{osmosisBlockchain}         & 150        & 0.54       & 6.67 (10)         & 28 (42)        & 1.080500e+02 & 14.52      \\ \hline
\textbf{Polygon}       & L2/ZK-rollup~\cite{polygon}                   &Tendermint~\cite{polygonBlockchain}         & 105        & 0.78       & 3.81 (4)          & 10.48 (11)        & 3.629552e+08 & 69.53       \\ \hline
\textbf{Sui}           & L1 blockchain~\cite{sui}        & Narwhal/BullShark~\cite{suiBlockchain}  & 106        & 0.41       & 13.21 (14)        & 33.02 (35)        & 9.290000e+00 & 6.37        \\ \hline
\end{tabular}
\vspace{-16pt}
\end{table*}

\subsection{Scope and Methodology of Data Collection}
In our study, we focus on permissionless blockchains that use classical consensus mechanisms, particularly those with weighted consensus as outlined in Section~\ref{sec:consensusclassification}. We limit our analysis to blockchains with deterministic validator set selection methods, such as PoS and DPoS. Notably, all the blockchains examined in this study employ DPoS for Sybil resistance.

For our empirical analysis, we selected ten blockchains, as shown in Table~\ref{tab:metrics}, namely: Aptos, Axelar, BNB (Binance), Celestia, Celo, Cosmos, Injective, Osmosis, Polygon, and Sui. These protocols are based on BullShark~\cite{spiegelman2022bullshark}, HotStuff~\cite{yin2019hotstuff}, IstanbulBFT~\cite{moniz2020istanbul}, and Tendermint/CometBFT~\cite{buchman2016tendermint} consensus mechanisms. These blockchains were chosen for their diverse applications, including smart contracts Layer-1 (L1) blockchains, interoperability protocols~\cite{belchior2021survey}, Layer-2 (L2) scaling solutions~\cite{mccorry2021sok}, data availability protocols~\cite{al2019lazyledger}, and decentralized exchanges~\cite{lehar2021decentralized}, ensuring the wide applicability of our findings. Our sample comprises blockchains with at least 300 million USD market capitalization, cumulatively amounting to a market capitalization of 60 billion USD as of December 14, 2023~\cite{coinmarketcap}.

Data collection was automated via scripts interfacing with the blockchains' RPC endpoints to fetch active validator sets and their staked tokens. Daily snapshots were taken to account for epoch changes and the data was systematically archived in a public GitHub repository\footnote{https://github.com/sm86/destake} to facilitate transparency and accessibility for ongoing and subsequent analyses.
\subsection{Data Analysis}

In Table~\ref{tab:metrics}, we present the decentralization metrics for ten blockchains, derived from the validator set data snapshot on 14 December 2023. To validate the reliability of our analysis, we continuously monitored over the prior month, confirming the absence of significant deviations in the observed trends.

Upon examining the validator set cardinality (\( m \)),  we note a range of 57 to 180 validators across the analysed blockchains. However, a concerning trend emerges when we consider the Nakamoto coefficients for liveness (\( \mathbb{N}_L\)) and safety (\( \mathbb{N}_S \)), which are notably low, varying from 4 to 18 and 11 to 42, respectively. This disparity implies potential vulnerabilities. For example, in the case of Polygon, merely the top 4 validators could censor an application~\cite{censorshipData}, furthermore, the top 11 validators might collude to alter the ledger. Despite a high number of validators, the proportion that could compromise system liveness (\( \rho_{\mathbb{N}_L} \)) and safety (\( \rho_{\mathbb{N}_S} \)) remains worryingly small, spanning only 2.87\% to 14.04\% and 8.62\% to 39.29\%, respectively. This observation leads us to an open challenge to bolster both the liveness and safety of blockchain systems, utilizing the available validators.
\vspace{-6pt}
\begin{researchquestionbox}
Given a validator set with cardinality (\( m \)), how can we enhance the Nakamoto coefficients (\( \rho_{\mathbb{N}_L}, \rho_{\mathbb{N}_S} \))?
\end{researchquestionbox}
\vspace{-5pt}
An examination of the Gini coefficient (\( G \)) in Table~\ref{tab:metrics}, ranging from 0.40 to 0.83, reveals a significant concentration of weight. This is indicative of a disproportionate stake distribution among a small subset of validators. The implications of this concentration are further underscored by the values of $\varepsilon$ in the $(m, \varepsilon, \delta)$-decentralization model. Both at $\delta = 0$ and $\delta = 50$, the $\varepsilon$ values substantially deviate from the ideal zero value needed for $(m, 0, 0)$, i.e., full decentralization. This observation leads us to our second critical challenge in the pursuit of enhancing blockchain decentralization. 
\vspace{-6pt}
\begin{researchquestionbox}
    How can we achieve a more equitable weight distribution (\( G \)) among validators?
\end{researchquestionbox}
\vspace{-5pt}
A simplistic approach to addressing these challenges might be to adopt a one-validator one-vote system with equal weight for all validators, complemented by a minimum stake threshold for participation. However, this approach shows limitations in practice in blockchains such as Ethereum. With the requirement of only 32 ETH for validator eligibility, Ethereum's network consists of over 800,000+ validators~\cite{ethereumCoindesk}. This high number necessitates random selection for consensus participation, impacting performance metrics such as time to finality. A more significant concern is operational centralization, exemplified by Lido controlling approximately 32.7\% of all validators~\cite{lidoArticle}, effectively making the Nakamoto coefficient for liveness valued at one (\(\mathbb{N}_L=1\)). Therefore, the problem lies in mitigating the risk of Sybil identities to achieve genuine decentralization, captured in the folowing challenge.
\vspace{-6pt}
\begin{researchquestionbox}
How to discourage creation of multiple Sybil identities? 
\end{researchquestionbox}
\vspace{-10pt}
\section{Advancing Decentralization: Finite SRSW Quorums}
\label{sec:srsw-model}
In this section, we address the identified challenges. We begin by defining primitives needed for our proposed solution. This is followed by the introduction of the SRSW (Square Root Stake Weighted) quorum. 
\subsection{Primitives and Assumptions}

\textbf{Validator Rewards:} 
Validators are incentivized with rewards for their participation in the consensus mechanism. We assume validators are rational and want to maximize their rewards. 

At the end of every epoch, each correct validator \( n_k \in N \) receives a reward, denoted as \( r_{n_k} \). The reward is calculated based on the system parameter \( \alpha \), representing the inflation factor that determines the rate of reward distribution. The reward for each validator is given by the equation:
\begin{equation}
    r_{n_k} = \alpha w_k, \quad \text{where} \quad \alpha > 0.
\end{equation}
If a correct validator holds \( s_k \) native tokens at the start of an epoch, their balance of native tokens after that epoch would increase to \( s_k + r_{n_k} \).

\textbf{Sybil Cost:}
We introduce a Sybil Cost, denoted as \( C > 0 \), to represent the additional expenses incurred by a validator operator when choosing to run multiple validator nodes instead of one. These expenses could include operational costs, such as computational resources or the amount of staked tokens required. We assume that \( C \) is sufficiently high, thereby providing resistance against Sybil attacks.

\textbf{Limit Validator Set Cardinality:}
We propose an upper limit for the validator set cardinality  \( M \), ensuring \( m \leq M \). To implement this in practice, the validator candidates are sorted based on their staked tokens, and we select the top \( M \) candidates for the validator set, i.e., the threshold staked tokens to become a validator is the stake of the \( M \)th validator candidate, represented by \( s_{M} \). Accordingly, the rewards for a validator candidate \( n_k \) with \( s_k \) staked tokens for an epoch is as follows: 
\begin{equation}
r_{n_k}=\begin{cases}
          \alpha w_k \quad &\text{if} \, s_k > s_M,  \\
          0 \quad &\text{if} \, s_k \leq s_M. \\
     \end{cases}
\end{equation}

Capping the validator set cardinality is justified for two reasons. Firstly, in line with classical consensus mechanisms, an increase in the number of validators tends to decrease the system scalability, measured in throughput and latency~\cite{yin2019hotstuff}. Secondly, by imposing a maximum limit on the number of validators, and a minimum capital requirement of \( s_{M} \) staked tokens for running a validator, the system discourages single entities from dominating the validator set (i.e., Sybil attacks), further explored in the following section.

This method, implemented in blockchains such as Axelar~\cite{axelarBlockchain} and Celo~\cite{celoBlockchain}, helps in balancing scalability and decentralization.

\subsection{SRSW Function}
Building on these primitives, we now focus on the challenge of achieving equitable influence among validators to improve the Nakamoto and Gini coefficients for a given validator set.

We propose the \textit{Square Root Stake Weight (SRSW)} function, a novel approach that diverges from traditional linear weightings in quorum \( \mathbb{Q}' \) computations. The SRSW function calculates the weight \( w^*_i \) of each validator \( n_i \) based on the square root of their staked tokens \( s_i \), as defined by:
\vspace{-4pt}
\begin{equation}
   w^*_i = \sqrt{s_i} 
   \vspace{-4pt}
\end{equation}

The revised quorum \( \mathbb{Q}^* \) for the validator set \( N \) is formulated as follows:
\vspace{-6pt}
\begin{equation}
\mathbb{Q}^* \geq \left( \frac{2}{3}\right) \sum_{i} w_i^* = \left( \frac{2}{3}\right) \sum_{i} \sqrt{s_i} \quad \forall i \in N
  \vspace{-4pt}
\end{equation}

Contrasting with linear models, the SRSW function aims to reduce the disproportionate influence of validators with high staked tokens. In essence, the SRSW approach diminishes the weight disparities between validators with varying stake amounts. 

The validator rewards \( r^*_{n_i} \) are structured to reinforce rational decisions. The reward formula is:
\begin{equation}
    r^*_{n_i} = \alpha w_i^* = \alpha \sqrt{s_i}, \text{ if } s_i > s_M; \text{ otherwise, } 0.
\end{equation}
This incentivizes validators to keep or increase their stakes above the threshold \( s_M \), aligning individual gains with the system's stability. In other words, the system should satisfy the following condition. 
\vspace{-4pt}
\begin{equation}
    r_{n_i}^* > r_{n_j}^* + r_{n_k}^* - C, \text{ where } s_i \geq s_j + s_k
    \vspace{-4pt}
\end{equation}
This inequality implies that a validator with a combined stake \( s_i \) gains more rewards by maintaining a single identity rather than dividing into multiple validators with smaller stakes \( s_j \) and \( s_k \).

Consider a validator with \( s_i = 4 \) and \( s_M = 3 \). The options are: split into \( s_j^1 = 2, s_k^1 = 2 \), or \( s_j^2 = 3, s_k^2 = 1 \), or not split. In the first case, \( s_j^1, s_k^1 < s_M \) yield no rewards. In the second, rewards are \( \alpha \sqrt{3} \) for \( s_j^2 \) only, as \( s_k^2 < s_M \). Not splitting, \( \alpha \sqrt{4} \), offers the highest reward. Our approach effectively deters stake fragmentation and mitigates Sybil attacks, promoting consolidated stakes as a strategically rational choice.

When both \( s_i \) and \( s_j \) exceed \( s_M \), the inequality is adjusted in terms of weights:
\vspace{-2pt}
\begin{equation}
    \sqrt{s_i} > \sqrt{s_j} + \sqrt{s_k} - \frac{C}{\alpha},\text{ where } s_i \geq s_j + s_k \text{ and } s_i, s_j > s_M
\end{equation}

Here, our assumption of high \( C \) plays a crucial role to make the division of validator stakes non-rational, which we explore in the subsequent section.

\subsection{Discussion}

\textbf{Determining M.}
A critical aspect of this approach is the determination of \( M \), the maximum cardinality of the validator set. A low \( M \) might risk insufficient decentralization, while an excessively high \( M \) could impact system performance due to increased communication complexity. Although a fixed \( M \) may appear counterintuitive to decentralization, delegation in DPoS mechanisms enable individual token holders to collectively participate in consensus, thereby mitigating potential centralization concerns~\cite{saad2020comparative}. In this work, we do not prescribe a specific value for \( M \), as it depends on the algorithm and implementation. In practice, we have observed around one hundred validators is the ideal number for current algorithms~\cite{yin2019hotstuff,0LNetwork}.

\textbf{Sybil costs.} We assume that \( C \) is high. In practice, this remains an open challenge, particularly in token-based systems~\cite{messias2023airdrops,messias2023understanding}. Potential solutions include the use of proof of personhood~\cite{worldcoin,borge2017proof}, limiting one validator per geospatial location~\cite{motepalli2023analyzing}, KYC compliance~\cite{hodgson2002know}, or a combination of these. While these approaches might mitigate the problem to some degree, they may come at the cost of lost anonymity. Furthermore, detecting cartels among validators is challenging, especially at the protocol layer; therefore, we do not address this issue in the consensus mechanism. In conclusion, we acknowledge that establishing Sybil costs is more a complex socio-economic challenge than a technical one and is beyond the scope of this work.

In summary, we propose the SRSW function with corresponding \( \mathbb{Q}' \) and \( r^* \), and provide considerations on \( M \) and \( C \). We now turn our attention to evaluating this approach.

\section{Evaluation: Improved Decentralization}
\label{sec:evaluation}
In this section, we demonstrate that the SRSW function achieves higher decentralization, as measured by \( G, \rho_{\mathbb{N}_L}, \) and \( \rho_{\mathbb{N}_S} \), compared to the linear model. We then reinforce these claims with empirical evidence.

\subsection{Decentralization Metrics Analysis}
Let us analyze how the SRSW model offers a better Nakamoto coefficient than the linear model. 
\begin{theorem}
\label{theorem:liveness}
Given a validator set \( N \), the SRSW model's Nakamoto coefficient for liveness, \( \rho_{\mathbb{N}_L}^* \), is greater than or equal to that of the linear stake-weight model, \( \rho_{\mathbb{N}_L} \).
\end{theorem}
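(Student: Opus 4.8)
The plan is to reduce the theorem to a single inequality about how the square-root transform redistributes cumulative weight shares, and then to combine it with the greedy description of the Nakamoto coefficient. Since both models are evaluated on the same validator set, the cardinality $m$ is identical for the two, so it suffices to prove $\mathbb{N}_L^* \geq \mathbb{N}_L$; dividing by $m$ and multiplying by $100$ then yields $\rho_{\mathbb{N}_L}^* \geq \rho_{\mathbb{N}_L}$.

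First I would observe that, among subsets $L$ of a fixed size $k$, the quantity $\sum_{i\in L} w_i$ is maximized by choosing the $k$ validators of largest weight; hence $\mathbb{N}_L$ is exactly the smallest $k$ for which the $k$ largest weights sum to at least one third of the total weight. Write the stakes in nonincreasing order $a_1 \geq a_2 \geq \cdots \geq a_m > 0$ (all strictly positive, since sub-threshold candidates are excluded from $N$). Because $x \mapsto \sqrt{x}$ is strictly increasing, this same permutation sorts the SRSW weights $\sqrt{a_i}$ as well, so the greedy characterization applies verbatim to both models under one common ordering.

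The crux is the claim that for every $k \in \{1,\dots,m\}$,
\begin{equation}
\frac{\sum_{i=1}^{k}\sqrt{a_i}}{\sum_{i=1}^{m}\sqrt{a_i}} \;\leq\; \frac{\sum_{i=1}^{k}a_i}{\sum_{i=1}^{m}a_i}.
\end{equation}
I would prove this by clearing the (positive) denominators, splitting each total as a top-$k$ part plus a tail, cancelling the common cross term $\big(\sum_{i\le k}\sqrt{a_i}\big)\big(\sum_{i\le k}a_i\big)$, and reducing to
\begin{equation}
\sum_{i=1}^{k}\sum_{j=k+1}^{m}\sqrt{a_i}\,a_j \;\leq\; \sum_{i=1}^{k}\sum_{j=k+1}^{m}a_i\,\sqrt{a_j},
\end{equation}
which holds term by term: whenever $i \le k < j$ we have $a_i \geq a_j$, hence $\sqrt{a_i}\,a_j \leq a_i\,\sqrt{a_j}$ (divide both sides by $\sqrt{a_i a_j} > 0$). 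This termwise step is precisely where concavity/monotonicity of the square root is used, and getting the algebraic reduction down to this clean pairwise comparison is the main—though modest—obstacle I anticipate.

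Finally I would close the argument: if $\mathbb{N}_L = k_0$, then for every $k < k_0$ the cumulative linear share of the top $k$ validators is strictly below $\tfrac13$, so by the displayed inequality their cumulative SRSW share is also strictly below $\tfrac13$; therefore no $k < k_0$ can witness $\mathbb{N}_L^*$, giving $\mathbb{N}_L^* \geq k_0 = \mathbb{N}_L$ and hence $\rho_{\mathbb{N}_L}^* \geq \rho_{\mathbb{N}_L}$. I would also remark that the edge cases are harmless: ties in stake do not change which cumulative sum first crosses the one-third threshold, and equality throughout the chain forces all $a_i$ equal, i.e. the linear model was already uniform, in which case both coefficients coincide.
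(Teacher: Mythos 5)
Your proposal is correct, and its overall strategy matches the paper's: both reduce the claim to $\mathbb{N}_L^* \geq \mathbb{N}_L$ (division by the common $m$ being immediate) and both rest on the fact that the square-root transform shrinks the cumulative weight share of the largest validators. The difference is in how that fact is established. The paper's proof simply asserts that ``the concave nature of the square root function, as per Jensen's inequality, necessitates a larger $K^*$,'' without spelling out how Jensen's inequality yields the subset-size comparison; as written it is closer to a sketch than a proof. You instead isolate the precise lemma that is needed, namely that for the stakes sorted in nonincreasing order $a_1 \geq \cdots \geq a_m > 0$ and every $k$,
\begin{equation}
\frac{\sum_{i=1}^{k}\sqrt{a_i}}{\sum_{i=1}^{m}\sqrt{a_i}} \;\leq\; \frac{\sum_{i=1}^{k}a_i}{\sum_{i=1}^{m}a_i},
\end{equation}
and you prove it by cross-multiplying, cancelling the common term, and reducing to the termwise inequality $\sqrt{a_i}\,a_j \leq a_i \sqrt{a_j}$ for $i \leq k < j$ --- a Chebyshev-sum-style pairwise argument that is elementary, complete, and checks out. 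You also supply two steps the paper leaves implicit: the greedy characterization (the minimizing subset of each size is the top-$k$ by weight, and the same ordering works for both models since $\sqrt{\cdot}$ is increasing), and the final contrapositive that no $k < \mathbb{N}_L$ can cross the one-third threshold under SRSW. The one cosmetic discrepancy is that the paper's final displayed implication claims a strict inequality $\rho_{\mathbb{N}_L}^* > \rho_{\mathbb{N}_L}$, which does not follow and contradicts the theorem statement itself; your non-strict conclusion, with the observation that equality forces a uniform stake distribution, is the correct one.
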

\vspace{-10pt}
\begin{proof}
Consider the Nakamoto coefficient computation in linear stake weight, let $K$ be the smallest subset such that: 
\vspace{-2pt}
\begin{equation}
    \sum_{n_i\in K} s_i > \frac{1}{3} \sum_{n_i\in N} s_i
\end{equation}
Similary in the SRSW model, let $K^*$ be the smallest subset satisfying the condition:
\vspace{-4pt}
\begin{equation}
    \sum_{i \in K^*} \sqrt{s_i} > \frac{1}{3} \sum_{i\in N} \sqrt{s_i}
\end{equation}
The concave nature of the square root function, as per Jensen's inequality~\cite{abramovich2004refining}, necessitates a larger $K^*$ to fulfill this condition in the SRSW model compared to $K$ in the linear model, thereby implying:
\vspace{-6pt}
\begin{equation}
    |K^*| \geq |K| \implies N_L^* \geq N_L
\end{equation}
\begin{equation}
    \frac{N_L^*}{m} \geq  \frac{N_L}{m} \implies \rho_{\mathbb{N}_L}^* > \rho_{\mathbb{N}_L}
\end{equation}
Hence, the relative Nakamoto coefficient for liveness is higher for SRSW compared to linear model. 
\end{proof}

Similarly, we can prove for Nakamoto coefficient-safety. 
\begin{theorem}
\label{theorem:safety}
Given \( N \), the SRSW model's Nakamoto coefficient for safety is greater than or equal to that of the linear stake-weight model.
\end{theorem}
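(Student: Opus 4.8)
The plan is to mirror the structure of the proof of Theorem~\ref{theorem:liveness}, replacing the one-third threshold with the quorum threshold \( \mathbb{Q} = \frac{2}{3}\sum_i w_i \). Concretely, let \( S \) be a smallest subset of \( N \) whose linear weights reach a quorum, i.e. \( \sum_{n_i \in S} s_i \geq \frac{2}{3}\sum_{n_i \in N} s_i \), so that \( \mathbb{N}_S = |S| \). Likewise let \( S^* \) be a smallest subset with \( \sum_{i \in S^*} \sqrt{s_i} \geq \frac{2}{3}\sum_{i \in N}\sqrt{s_i} \), so that \( \mathbb{N}_S^* = |S^*| \). The goal is to show \( |S^*| \geq |S| \), whence dividing by \( m \) gives \( \rho_{\mathbb{N}_S}^* \geq \rho_{\mathbb{N}_S} \).

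The key step is the same concavity argument: because \( x \mapsto \sqrt{x} \) is concave and increasing, passing to square roots compresses the weight distribution toward equality, so the largest \( j \) validators capture a smaller \emph{fraction} of total weight under SRSW than under the linear model, for every \( j \). I would make this precise by ordering validators in decreasing stake \( s_1 \geq s_2 \geq \cdots \geq s_m \) (which is also decreasing order of \( \sqrt{s_i} \)), and showing that the partial-sum ratio \( \big(\sum_{i \le j}\sqrt{s_i}\big)\big/\big(\sum_{i \le m}\sqrt{s_i}\big) \) is at most \( \big(\sum_{i\le j}s_i\big)\big/\big(\sum_{i\le m}s_i\big) \) for each \( j \) — equivalently, the Lorenz curve of \( \{\sqrt{s_i}\} \) dominates that of \( \{s_i\} \), which is exactly the majorization fact underlying the Gini improvement. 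Since an optimal Nakamoto set is always a set of top validators by weight (greedily taking the heaviest validators minimizes cardinality needed to reach any fixed fraction of total weight), this inequality shows that the smallest \( j \) with the SRSW partial ratio reaching \( 2/3 \) is no smaller than the corresponding \( j \) for the linear model, i.e. \( \mathbb{N}_S^* \geq \mathbb{N}_S \).

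I would then conclude by dividing through by \( m \): \( \rho_{\mathbb{N}_S}^* = \mathbb{N}_S^*/m \geq \mathbb{N}_S/m = \rho_{\mathbb{N}_S} \), which is the claimed inequality. The only structural change from the liveness proof is the constant in the threshold, and since the argument never used the specific value \( 1/3 \) — only that it is a fixed fraction of the total weight — it carries over verbatim to the quorum fraction \( 2/3 \) used for \( \mathbb{N}_S \).

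The main obstacle, and the point I would be most careful about, is justifying ``concavity forces a larger subset'' rigorously rather than by a bare appeal to Jensen's inequality: Jensen compares an average of \( \sqrt{s_i} \) to \( \sqrt{\text{average}} \), but what is actually needed is the Lorenz/majorization statement about partial sums of the sorted sequence. I would therefore state and use the partial-sum (Hardy–Littlewood–Pólya majorization) comparison explicitly, noting that it is the same inequality that powers the Gini-coefficient improvement elsewhere in the paper, and flag the mild edge cases — ties in stake, the threshold being met with equality, and the degenerate all-equal distribution where the inequality becomes an equality (consistent with the ``greater than or equal to'' in the statement).
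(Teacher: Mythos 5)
Your proposal is correct and follows the same overall route the paper intends: the paper offers no separate proof of Theorem~\ref{theorem:safety}, saying only that it follows ``similarly'' to Theorem~\ref{theorem:liveness}, i.e.\ by rerunning the liveness argument with the quorum threshold \( \tfrac{2}{3}\sum_i w_i \) in place of \( \tfrac{1}{3}\sum_i w_i \). Where you genuinely improve on the paper is in the key step: the liveness proof justifies \( |K^*| \geq |K| \) by a bare appeal to Jensen's inequality, which compares an average of \( \sqrt{s_i} \) to the square root of the average and does not by itself yield the needed comparison of sorted partial sums. Your replacement --- ordering \( s_1 \geq \cdots \geq s_m \) and showing
\begin{equation}
\frac{\sum_{i \leq j} \sqrt{s_i}}{\sum_{i=1}^{m} \sqrt{s_i}} \;\leq\; \frac{\sum_{i \leq j} s_i}{\sum_{i=1}^{m} s_i} \quad \text{for every } j,
\end{equation}
which follows from the termwise inequality \( \sqrt{s_i}\,s_k \leq s_i \sqrt{s_k} \) for \( i \leq j < k \) --- is exactly the Lorenz-dominance statement that is actually required, and it is the right tool because the optimal Nakamoto set is a prefix of the sorted validators. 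Your observation that the argument never uses the specific value of the threshold fraction is also what makes the single proof cover both theorems. In short: same architecture as the paper, but with the concavity step made rigorous; no gap.
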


\begin{theorem}
Given \( N \), the Gini indices of the SRSW and linear models, represented by \( G^* \) and \( G \), respectively, satisfy \( G^* \leq G \).
\end{theorem}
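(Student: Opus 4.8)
**Proof proposal for $G^* \leq G$.**

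The plan is to show that the Lorenz curve of the SRSW weight vector $(\sqrt{s_1},\dots,\sqrt{s_m})$ lies nowhere below the Lorenz curve of the linear weight vector $(s_1,\dots,s_m)$; since the Gini index equals twice the area between the line of equality and the Lorenz curve, pointwise domination of Lorenz curves immediately yields $G^* \leq G$. Equivalently, it suffices to prove that the square-root transformation makes the weight distribution Lorenz-dominate (i.e.\ be majorized by, in the normalized sense) the original one. This is a standard consequence of the fact that $x \mapsto \sqrt{x}$ is concave and increasing on $[0,\infty)$: applying a concave increasing function to a vector of nonnegative numbers always produces a vector that is ``more equal'' in the Lorenz order.

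Concretely, I would first sort the validators so that $s_1 \leq s_2 \leq \cdots \leq s_m$; then $\sqrt{s_1} \leq \cdots \leq \sqrt{s_m}$ as well, so the same ordering sorts both weight vectors and the Lorenz curves are built from the same cumulative index set. Writing $S = \sum_k s_k$ and $S^* = \sum_k \sqrt{s_k}$, the claim reduces to the inequality
\begin{equation}
\frac{\sum_{k=1}^{j} \sqrt{s_k}}{S^*} \;\geq\; \frac{\sum_{k=1}^{j} s_k}{S} \qquad \text{for every } j = 1,\dots,m,
\end{equation}
i.e.\ the bottom $j$ validators hold at least as large a fraction of total weight under SRSW as under the linear model. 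The key structural fact driving this is that the ``per-unit'' weight $\sqrt{s_k}/s_k = 1/\sqrt{s_k}$ is nonincreasing in $s_k$: small-stake validators get a proportionally larger boost than large-stake ones. A clean way to package the argument is to show the ratio of partial sums $R_j = \big(\sum_{k\le j}\sqrt{s_k}\big)\big/\big(\sum_{k\le j}s_k\big)$ is nonincreasing in $j$ (each new term $\sqrt{s_{j+1}}/s_{j+1}$ added is no larger than the running average), hence $R_j \geq R_m = S^*/S$ for all $j$, which rearranges to exactly the displayed Lorenz-domination inequality.

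The main obstacle, such as it is, is being careful with the normalization: Gini compares \emph{normalized} Lorenz curves, so one must divide by the respective totals $S$ and $S^*$, and the concavity of $\sqrt{\cdot}$ alone is not enough without exploiting that it is also increasing (so the sorted orders coincide) and that it passes through the origin (so the ``per-unit weight is decreasing'' monotonicity holds). I would also handle the degenerate edge case where all $s_k$ are equal, in which case both Lorenz curves coincide with the line of equality and $G^* = G = 0$, so the inequality holds with equality — consistent with the ``$\leq$'' in the statement. No continuity or Jensen machinery beyond this is needed; the argument is entirely finite and combinatorial once the monotonicity of $R_j$ is established.
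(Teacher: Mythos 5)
Your proposal is correct, and it is a genuinely different — and substantially more rigorous — route than the paper's. The paper's proof is a two-sentence qualitative appeal: it asserts that the square-root transformation ``results in a more uniform distribution of weights'' and therefore lowers the Gini index, citing Theorem~\ref{theorem:liveness} and the definitions in Section~\ref{sec:metrics}, but it never identifies the ordering on distributions that makes ``more uniform'' imply ``lower Gini,'' nor verifies that the transformation induces that ordering. You supply exactly the missing content: the right notion is Lorenz dominance, and you prove it by showing the ratio of partial sums $R_j = \bigl(\sum_{k\le j}\sqrt{s_k}\bigr)/\bigl(\sum_{k\le j}s_k\bigr)$ is nonincreasing, which follows from $1/\sqrt{s_{j+1}} \le 1/\sqrt{s_k}$ for $k \le j$ under the sorted order; then $R_j \ge R_m$ rearranges to the normalized Lorenz inequality, and pointwise domination of Lorenz curves gives $G^* \le G$ since the Gini index is a decreasing function of the area under the Lorenz curve. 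Your observation that concavity alone is not enough — one also needs $\sqrt{\cdot}$ increasing (so both vectors sort identically) and vanishing at the origin (so the per-unit weight $1/\sqrt{s_k}$ is monotone) — is a genuine correction to the paper's framing, which leans on Jensen-style concavity language carried over from Theorem~\ref{theorem:liveness}. The only housekeeping worth adding is the standing assumption $s_k > 0$ (guaranteed by the paper's requirement $w_k > 0$), so that the ratios $1/\sqrt{s_k}$ and the denominators of $R_j$ are well defined. In short: the paper's argument buys brevity at the cost of being essentially an assertion; yours is a complete, finite, self-contained proof and is what a rigorous version of this theorem should contain.
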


\begin{proof}
This proof draws upon the established principles from Theorem~\ref{theorem:liveness} and the definitions provided in Section~\ref{sec:metrics}.

In the SRSW model, the square root transformation applied to validator stakes results in a more uniform distribution of weights, effectively reducing relative disparities in stake sizes compared to the linear model. Consequently, this leads to a lower Gini index in the SRSW model, compared to the Gini index in the linear model.

Therefore, \( G^* \leq G \), indicating a more equitable distribution of validator influence under the SRSW model.
\end{proof}
\vspace{-5pt}

\begin{figure*}[htbp]
    \vspace{-12pt}
  \centering
  \includegraphics[trim={1cm 0.8cm 1cm 2cm}, width=\textwidth]{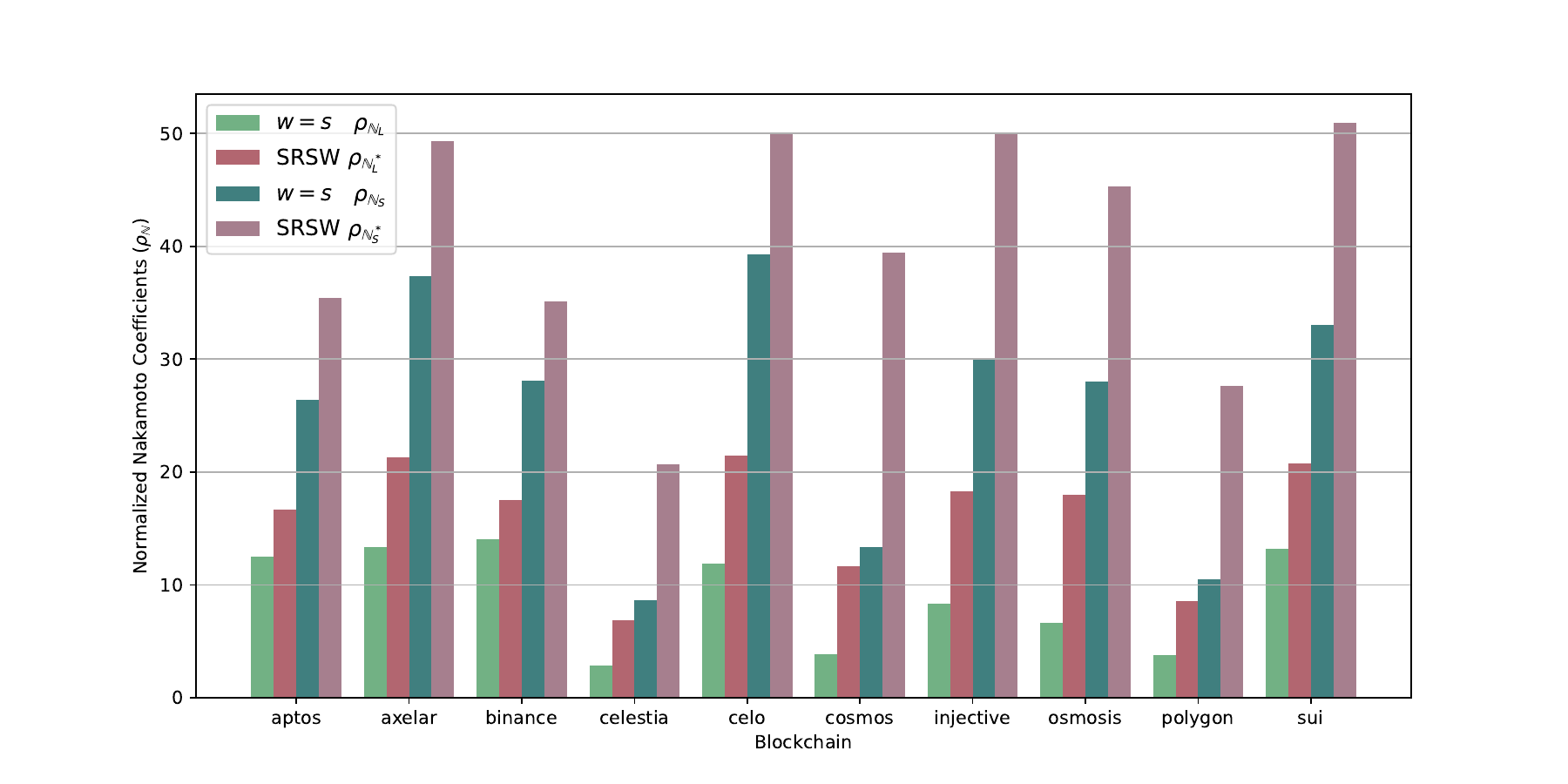}
  \caption{Comparison of Nakamoto coefficients for safety and liveness in linear and SRSW weighting functions}
  \label{fig:nakamoto_coefficients}
  \vspace{-14pt}
\end{figure*}

\begin{table}[]
\centering
\caption{Percentage decrease in Gini and percentage increase in Nakamoto coefficients with SRSW}
\label{tab:percentimprovement}
\vspace{-5pt}
\renewcommand{\arraystretch}{1.5}
\begin{tabular}{|l|l|l|l|}
\hline
\textbf{} & \textbf{\( G \) \% decrease}   &\textbf{ \( \mathbb{N}_L \%\) increase }   & \textbf{\( \mathbb{N}_S \%\) increase}    \\ \hline \hline
Aptos     & 26.78 & 33.33  & 34.21  \\ \hline
Axelar    & 39.02    & 60.0 & 32.14  \\ \hline
Binance   & 25.45 & 25    & 25     \\ \hline
Celestia &  22.89 & 140   & 140       \\ \hline
Celo      & 35 & 80    & 27.27  \\ \hline
Cosmos    & 45.58 & 200   & 195.83    \\ \hline
Injective & 48.97 & 120   & 66.66    \\ \hline
Osmosis   & 46.29 & 170   & 61.90  \\ \hline
Polygon   & 32.05 & 125   & 163.63 \\ \hline
Sui       & 48.78 & 57.14 & 54.28 \\ \hline \hline
\textbf{mean} & \textbf{37.16 }&\textbf{ 101.04} & \textbf{80.09} \\ \hline
\end{tabular}
\vspace{-18pt}
\end{table}

\begin{figure*}[ht]
    \centering
    \begin{subfigure}[t]{0.32\textwidth}
        \centering
        \includegraphics[width=\linewidth, trim= {0.5cm 0.5cm 0.5cm 0.5cm}]{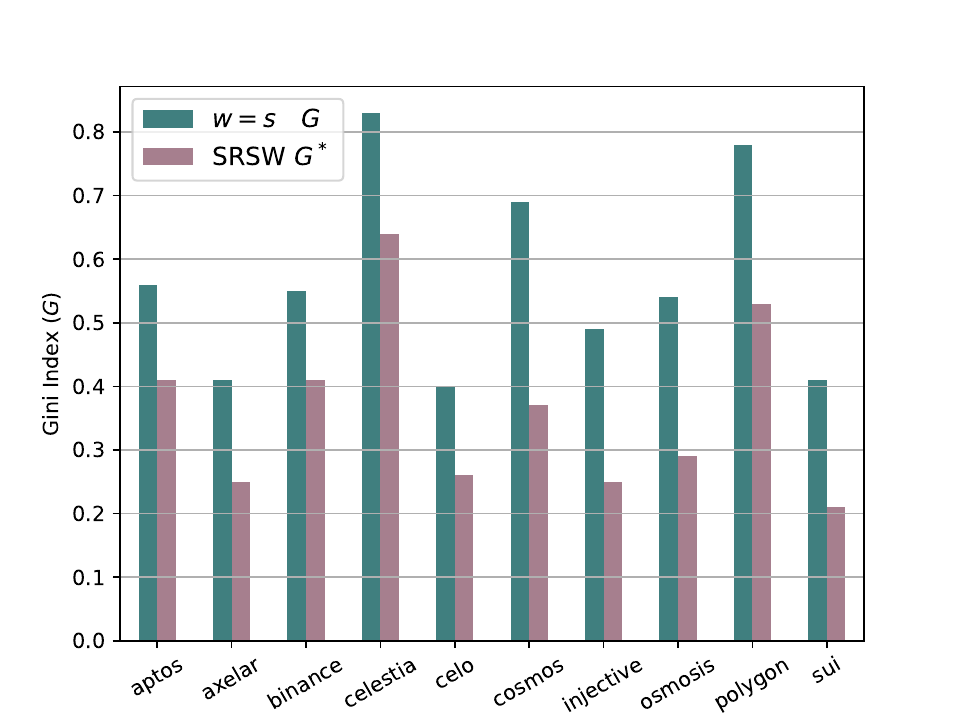}
        \caption{Comparison of Gini index}
        \label{fig:gini}
    \end{subfigure}
    \begin{subfigure}[t]{0.32\textwidth}
        \centering
        \includegraphics[width=\linewidth, trim= {3cm 0.5cm 2.8cm 250cm}]{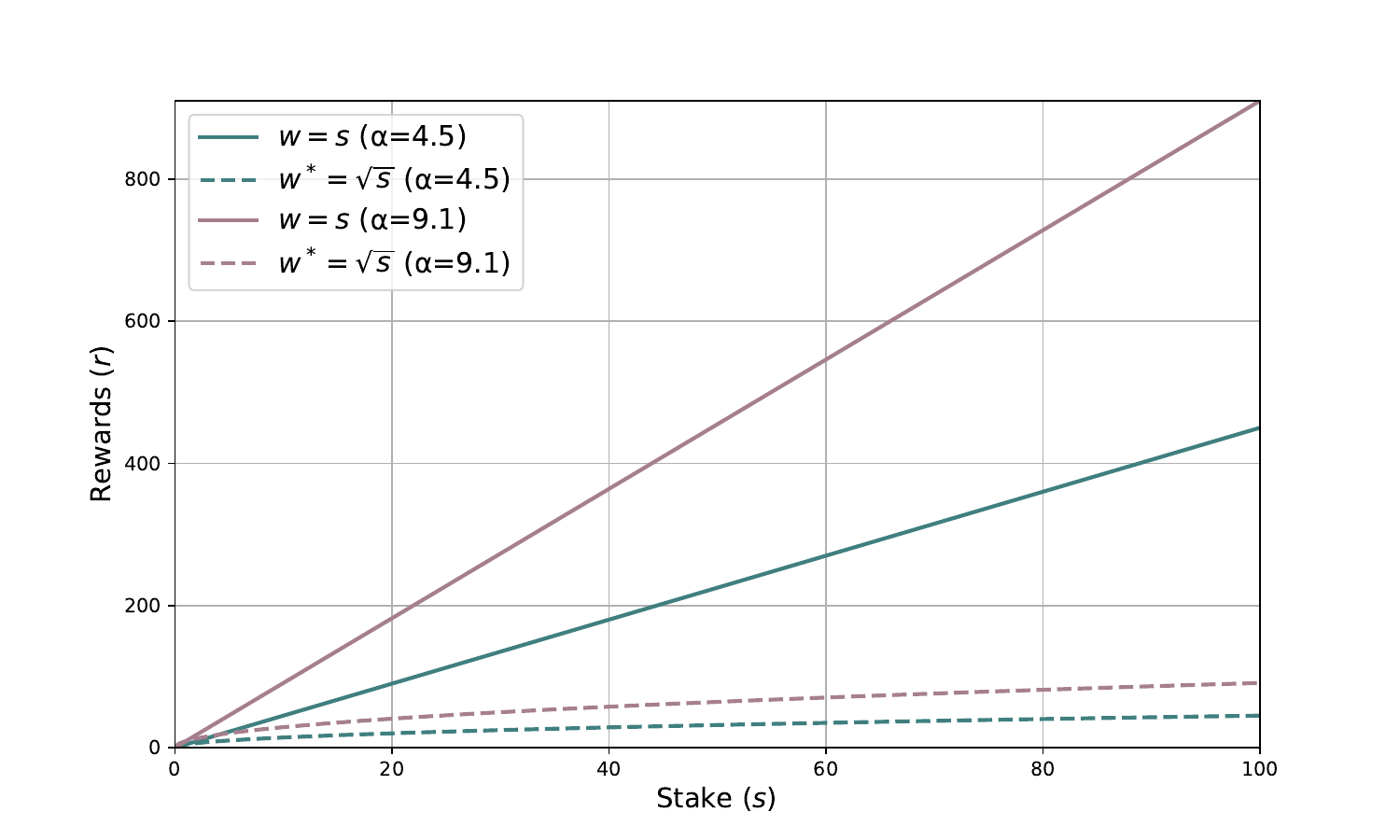}
        \caption{Reward rate with varying inflations}
        \label{fig:reward}
    \end{subfigure}
    \begin{subfigure}[t]{0.32\textwidth}
        \centering
        \includegraphics[width=\linewidth, trim= {0.5cm 0.5cm 0.5cm 0.5cm}]{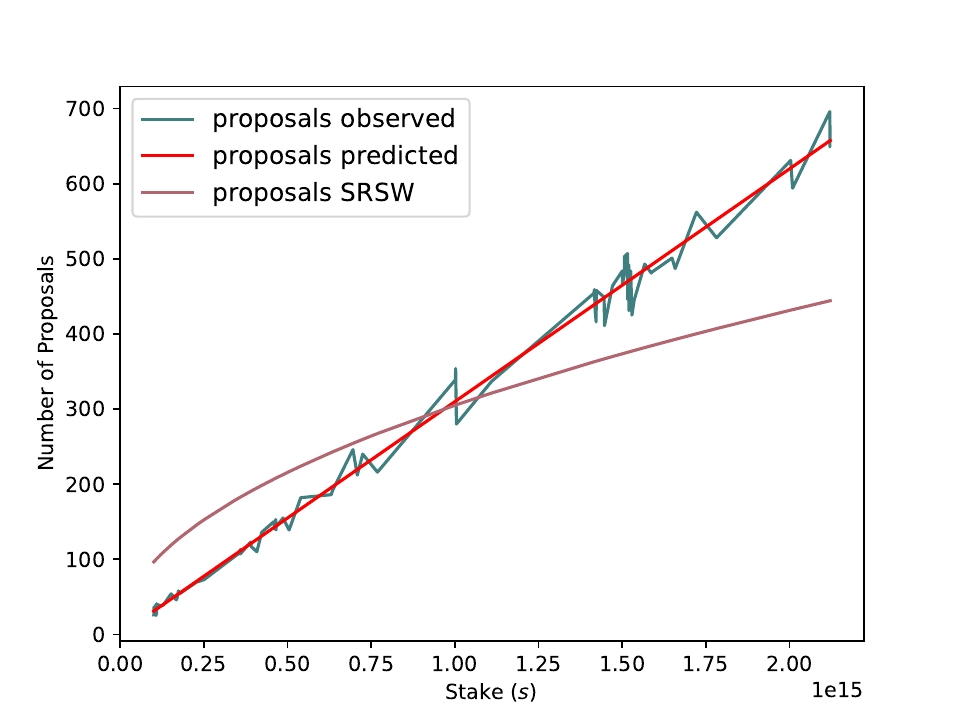}
        \caption{Block proposals made by validators}
        \label{fig:proposals}
    \end{subfigure}
    \caption{Evaluation of SRSW against linear stake weight}
    \label{fig:three_images}
    \vspace{-20pt}
\end{figure*}

\subsection{Empirical Validation}
In this subsection, we use the validator set data outlined in Section~\ref{sec:empiricalanalysis} to recalculate weights as per SRSW function. We then compare decentralization metrics of the SRSW model against the linear stake weight model for specified validator sets.

\subsubsection{Key Findings}
\begin{itemize}
    \item \textbf{Improvement in Gini Index:} With SRSW function, \( G \) observed a decrease ranging from 22.89\% to 48.97\%, as shown in Figure~\ref{fig:gini} and Table~\ref{tab:percentimprovement}. This reduction indicates a more equitable distribution of weight, enhancing the decentralization.
    \item \textbf{Increase in Nakamoto Coefficients:} As observed in Figure~\ref{fig:nakamoto_coefficients}, the Nakamoto coefficients for SRSW are superior to the current approach. In terms of absolute numbers, as shown in Table~\ref{tab:percentimprovement}, \(\mathbb{N}_L \) saw an increase between 25\% and 200\%, while \(\mathbb{N}_S \) had the increment range from 25\% to 195.83\%. These increases imply a stronger resistance to centralization by SRSW, requiring higher number of validators to influence the consensus.

\end{itemize}
\subsubsection{Implications}

% \begin{figure}[htbp]
%     \centering    
%     \includegraphics[trim= {1.5cm 1.3cm 1.5cm 1.5cm}, width=\columnwidth] {figs/reward_rate.pdf}
%     \caption{Reward rate against stake with varying inflations and SRSW}
%     \vspace{-10pt}
%     \label{fig:reward}
% \end{figure}

% \begin{figure}[htbp]
%     \centering
%     \includegraphics[trim= {0.2cm 0.6cm 1.5cm 1.5cm}, width=\columnwidth]{figs/proposals_srsw.pdf}
%     \caption{Proposals made by validators}
%     \label{fig:proposals}
%     \vspace{-10pt}
% \end{figure}

\begin{itemize}
    \item \textbf{Rewards - Rate of Growth:} 
    In Figure~\ref{fig:reward}, we analyze the growth rate of rewards using two annual inflation rates ($\alpha = {4.5,9.1}$), chosen based on typical values observed in current blockchain implementations~\cite{stakingrewards}. This analysis underscores the benefits of the SRSW model, particularly in moderating the reward growth for validators with larger stakes. By addressing the 'rich get richer' narrative, the SRSW model promotes a fairer reward distribution, leading to more equitable weight compounding across epochs.

    \item \textbf{Block Generation Decentralization:} 
    Figure~\ref{fig:proposals} explores the dynamics of block proposal generation. Utilizing data from the Aptos blockchain~\cite{aptosBlockchain}, we initially demonstrate how the current block proposers are chosen based on their linear stake weights. Subsequently, we compare this against the predicted block proposer distribution under the SRSW model. Our findings reveal that the SRSW model leads to a more diverse array of block proposers, playing a crucial role in mitigating Miner Extractable Value (MEV) risks~\cite{daian2020flash} and enhancing censorship resistance~\cite{censorshipData}. This diversification in proposers is consistent with decentralization metrics discussed in contemporary studies~\cite{li2020comparison,li2023cross}, highlighting the SRSW model’s contribution to decentralization.
\end{itemize}
\vspace{-8pt}
\section{Related Work}
\label{sec:relatedwork}
Decentralization in blockchains, a cornerstone for blockchain efficacy, have been extensively explored, with a focus on governance~\cite{balajidecentralization, fritsch2022analyzing, kiayias2022sok, sharma2023unpacking, tan2023open}. Recent studies, such as those on token-based voting in DeFi protocols~\cite{messias2023understanding}, underscore the evolving complexities in blockchain governance. Particularly, decentralization research in PoS and DPoS systems have illuminated the challenges of weight concentration in these protocols~\cite{lin2021measuring, li2023liquid, liu2022understanding, li2023cross, li2020comparison, li2023hard, liu2022decentralization, kim2019stellar}.

These studies identify and quantify the weight concentration challenge in weighted consensus, yet solutions to this issue remain underexplored. Existing suggestions, such as capping proposals per validator~\cite{jeong2020centralized} and introducing reward sharing in validator staking pools~\cite{brunjes2020reward}, offer only partial remedies. Latest research on introducing virtual stake based on validator performance~\cite{mivsic2023towards} is addressing the challenge but raises potential vulnerabilities such as the `nothing-at-stake' problem~\cite{motepalli2021reward}. Our work diverges by enhancing decentralization directly at the consensus mechanism, without the need for new tokens or introducing associated vulnerabilities, thus improving decentralization more holistically.
\vspace{-6pt}
\section{Conclusions}
\label{sec:conclusion}
In this study, we introduced the Square Root Stake Weight (SRSW) function to address weight concentration in permissionless blockchains, demonstrating substantial improvements in decentralization metrics such as the Gini index and Nakamoto coefficients. While acknowledging the intricacies related to Sybil cost augmentation, this paper highlights the necessity for further investigation into practical implementations and governance models. Future research directions could include exploring geospatial weight distribution and even auction-based mechanisms for validator set selection, potentially offering a means to further decentralize and economically optimize blockchain consensus mechanisms.
\section*{Acknowledgment}
This work has in part been supported by Sui Foundation and NSERC.
% ---- Bibliography ----
\bibliographystyle{plain}
\bibliography{references}

\begin{thebibliography}{10}

\bibitem{coinmarketcap}
\url{https://coinmarketcap.com/}.
\newblock Accessed: 2023-12-14.

\bibitem{0LNetwork}
0L.
\newblock Proof-of-fee, part 2.
\newblock \url{https://0l.network/2022/10/20/proof-of-fee-part-2-a-proposal/}.
\newblock Accessed: 2023-12-15.

\bibitem{abramovich2004refining}
Shoshana Abramovich, Graham Jameson, and Gord Sinnamon.
\newblock Refining jensen's inequality.
\newblock {\em Bulletin math{\'e}matique de la Soci{\'e}t{\'e} des Sciences Math{\'e}matiques de Roumanie}, pages 3--14, 2004.

\bibitem{injectiveBlog}
Big Ace.
\newblock Injective tendermint core: A powerful consensus engine for decentralized finance.
\newblock \url{https://medium.com/@charlesace/injective-tendermint-core-a-powerful-consensus-engine-for-decentralized-finance-a1db298b0b70}.
\newblock Accessed: 2023-12-14.

\bibitem{al2019lazyledger}
Mustafa Al-Bassam.
\newblock Lazyledger: A distributed data availability ledger with client-side smart contracts.
\newblock {\em arXiv preprint arXiv:1905.09274}, 2019.

\bibitem{anceaume2020finality}
Emmanuelle Anceaume, Antonella Pozzo, Thibault Rieutord, and Sara Tucci-Piergiovanni.
\newblock On finality in blockchains.
\newblock {\em arXiv preprint arXiv:2012.10172}, 2020.

\bibitem{aptos}
Aptos.
\newblock \url{https://aptosfoundation.org/}.
\newblock Accessed: 2023-10-25.

\bibitem{austgen2023dao}
James Austgen, Andr{\'e}s F{\'a}brega, Sarah Allen, Kushal Babel, Mahimna Kelkar, and Ari Juels.
\newblock Dao decentralization: Voting-bloc entropy, bribery, and dark daos.
\newblock {\em arXiv preprint arXiv:2311.03530}, 2023.

\bibitem{axelar}
Axelar.
\newblock \url{https://axelar.network/}.
\newblock Accessed: 2023-10-25.

\bibitem{baudet2019state}
Mathieu Baudet, Avery Ching, Andrey Chursin, George Danezis, Fran{\c{c}}ois Garillot, Zekun Li, Dahlia Malkhi, Oded Naor, Dmitri Perelman, and Alberto Sonnino.
\newblock State machine replication in the libra blockchain.
\newblock {\em The Libra Assn., Tech. Rep}, 7, 2019.

\bibitem{belchior2021survey}
Rafael Belchior, Andr{\'e} Vasconcelos, S{\'e}rgio Guerreiro, and Miguel Correia.
\newblock A survey on blockchain interoperability: Past, present, and future trends.
\newblock {\em ACM Computing Surveys (CSUR)}, 54(8):1--41, 2021.

\bibitem{borge2017proof}
Maria Borge, Eleftherios Kokoris-Kogias, Philipp Jovanovic, Linus Gasser, Nicolas Gailly, and Bryan Ford.
\newblock Proof-of-personhood: Redemocratizing permissionless cryptocurrencies.
\newblock In {\em 2017 IEEE European Symposium on Security and Privacy Workshops (EuroS\&PW)}, pages 23--26. IEEE, 2017.

\bibitem{brunjes2020reward}
Lars Br{\"u}njes, Aggelos Kiayias, Elias Koutsoupias, and Aikaterini-Panagiota Stouka.
\newblock Reward sharing schemes for stake pools.
\newblock In {\em 2020 IEEE european symposium on security and privacy (EuroS\&p)}, pages 256--275. IEEE, 2020.

\bibitem{buchman2016tendermint}
Ethan Buchman.
\newblock {\em Tendermint: Byzantine fault tolerance in the age of blockchains}.
\newblock PhD thesis, University of Guelph, 2016.

\bibitem{buchman2018latest}
Ethan Buchman, Jae Kwon, and Zarko Milosevic.
\newblock The latest gossip on bft consensus.
\newblock {\em arXiv preprint arXiv:1807.04938}, 2018.

\bibitem{castro1999practical}
Miguel Castro, Barbara Liskov, et~al.
\newblock Practical byzantine fault tolerance.
\newblock In {\em OsDI}, volume~99, pages 173--186, 1999.

\bibitem{celestia}
Celestia.
\newblock \url{https://celo.org/}.
\newblock Accessed: 2023-12-14.

\bibitem{celo}
Celo.
\newblock \url{https://celo.org/}.
\newblock Accessed: 2023-12-14.

\bibitem{ceriani2012origins}
Lidia Ceriani and Paolo Verme.
\newblock The origins of the gini index: extracts from variabilit{\`a} e mutabilit{\`a} (1912) by corrado gini.
\newblock {\em The Journal of Economic Inequality}, 10:421--443, 2012.

\bibitem{bnb}
BNB Chain.
\newblock \url{https://www.bnbchain.org/}.
\newblock Accessed: 2023-10-25.

\bibitem{bnbBlockchain}
BNB~Smart Chain.
\newblock White paper.
\newblock \url{https://github.com/bnb-chain/whitepaper/blob/master/WHITEPAPER.md}.
\newblock Accessed: 2023-11-08.

\bibitem{cosmos}
Cosmos.
\newblock \url{https://cosmos.network/}.
\newblock Accessed: 2023-10-25.

\bibitem{daian2020flash}
Philip Daian, Steven Goldfeder, Tyler Kell, Yunqi Li, Xueyuan Zhao, Iddo Bentov, Lorenz Breidenbach, and Ari Juels.
\newblock Flash boys 2.0: Frontrunning in decentralized exchanges, miner extractable value, and consensus instability.
\newblock In {\em 2020 IEEE Symposium on Security and Privacy (SP)}, pages 910--927. IEEE, 2020.

\bibitem{david2018ouroboros}
Bernardo David, Peter Ga{\v{z}}i, Aggelos Kiayias, and Alexander Russell.
\newblock Ouroboros praos: An adaptively-secure, semi-synchronous proof-of-stake blockchain.
\newblock In {\em Advances in Cryptology--EUROCRYPT 2018: 37th Annual International Conference on the Theory and Applications of Cryptographic Techniques, Tel Aviv, Israel, April 29-May 3, 2018 Proceedings, Part II 37}, pages 66--98. Springer, 2018.

\bibitem{de2018pbft}
Stefano De~Angelis, Leonardo Aniello, Roberto Baldoni, Federico Lombardi, Andrea Margheri, Vladimiro Sassone, et~al.
\newblock Pbft vs proof-of-authority: Applying the cap theorem to permissioned blockchain.
\newblock In {\em CEUR workshop proceedings}, volume 2058. CEUR-WS, 2018.

\bibitem{aptosBlockchain}
Aptos Dev.
\newblock The aptos blockchain: Safe, scalable, and upgradeable web3 infrastructure.
\newblock \url{https://aptos.dev/aptos-white-paper/}.
\newblock Published:2022-08-11, v1.0.

\bibitem{celestiaDocs}
Celestia Docs.
\newblock Celestia's data availability layer.
\newblock \url{https://docs.celestia.org/learn/how-celestia-works/data-availability-layer}.
\newblock Accessed: 2023-12-14.

\bibitem{celoBlockchain}
Celo docs.
\newblock Consensus.
\newblock \url{https://docs.celo.org/protocol/consensus}.
\newblock Accessed: 2023-11-26.

\bibitem{osmosisBlockchain}
Osmosis Docs.
\newblock Glossary.
\newblock \url{https://docs.osmosis.zone/overview/educate/terminology#consensus}.
\newblock Accessed: 2023-11-26.

\bibitem{suiBlockchain}
Sui Docs.
\newblock Validator committee.
\newblock \url{https://docs.sui.io/guides/operator/validator-committee}.
\newblock Accessed: 2023-11-26.

\bibitem{douceur2002sybil}
John~R Douceur.
\newblock The sybil attack.
\newblock In {\em International workshop on peer-to-peer systems}, pages 251--260. Springer, 2002.

\bibitem{ofacSanctioned}
Justin Drake and Toni Wahrstätter.
\newblock Ethereum addresses added to the ofac sdn list.
\newblock \url{https://github.com/ultrasoundmoney/ofac-ethereum-addresses/}.
\newblock Accessed: 2023-12-17.

\bibitem{duan2022foundations}
Sisi Duan and Haibin Zhang.
\newblock Foundations of dynamic bft.
\newblock In {\em 2022 IEEE Symposium on Security and Privacy (SP)}, pages 1317--1334. IEEE, 2022.

\bibitem{fritsch2022analyzing}
Robin Fritsch, Marino M{\"u}ller, and Roger Wattenhofer.
\newblock Analyzing voting power in decentralized governance: Who controls daos?
\newblock {\em arXiv preprint arXiv:2204.01176}, 2022.

\bibitem{garay2015bitcoin}
Juan Garay, Aggelos Kiayias, and Nikos Leonardos.
\newblock The bitcoin backbone protocol: Analysis and applications.
\newblock In {\em Annual international conference on the theory and applications of cryptographic techniques}, pages 281--310. Springer, 2015.

\bibitem{gastwirth1972estimation}
Joseph~L Gastwirth.
\newblock The estimation of the lorenz curve and gini index.
\newblock {\em The review of economics and statistics}, pages 306--316, 1972.

\bibitem{gelashvili2022jolteon}
Rati Gelashvili, Lefteris Kokoris-Kogias, Alberto Sonnino, Alexander Spiegelman, and Zhuolun Xiang.
\newblock Jolteon and ditto: Network-adaptive efficient consensus with asynchronous fallback.
\newblock In {\em International conference on financial cryptography and data security}, pages 296--315. Springer, 2022.

\bibitem{gilad2017algorand}
Yossi Gilad, Rotem Hemo, Silvio Micali, Georgios Vlachos, and Nickolai Zeldovich.
\newblock Algorand: Scaling byzantine agreements for cryptocurrencies.
\newblock In {\em Proceedings of the 26th symposium on operating systems principles}, pages 51--68, 2017.

\bibitem{gini1921measurement}
Corrado Gini.
\newblock Measurement of inequality of incomes.
\newblock {\em The economic journal}, 31(121):124--125, 1921.

\bibitem{axelarBlockchain}
GitHub.
\newblock Axelar core.
\newblock \url{https://github.com/axelarnetwork/axelar-core/blob/main/docs/cli/axelard_tendermint_version.md}.
\newblock Accessed: 2023-11-26.

\bibitem{gueta2019sbft}
Guy~Golan Gueta, Ittai Abraham, Shelly Grossman, Dahlia Malkhi, Benny Pinkas, Michael Reiter, Dragos-Adrian Seredinschi, Orr Tamir, and Alin Tomescu.
\newblock Sbft: A scalable and decentralized trust infrastructure.
\newblock In {\em 2019 49th Annual IEEE/IFIP international conference on dependable systems and networks (DSN)}, pages 568--580. IEEE, 2019.

\bibitem{hodgson2002know}
Damian Hodgson.
\newblock “know your customer”: marketing, governmentality and the “new consumer” of financial services.
\newblock {\em Management Decision}, 40(4):318--328, 2002.

\bibitem{injective}
Injective.
\newblock \url{https://injective.com/}.
\newblock Accessed: 2023-12-14.

\bibitem{jeong2020centralized}
Seungwon~Eugene Jeong.
\newblock Centralized decentralization: Does voting matter? simple economics of the dpos blockchain governance.
\newblock {\em Simple Economics of the DPoS Blockchain Governance (April 21, 2020)}, 2020.

\bibitem{kiayias2022sok}
Aggelos Kiayias and Philip Lazos.
\newblock Sok: blockchain governance.
\newblock {\em arXiv preprint arXiv:2201.07188}, 2022.

\bibitem{ethereumCoindesk}
Christine Kim.
\newblock The most pressing issue on ethereum is validator size growth.
\newblock \url{https://www.coindesk.com/consensus-magazine/2023/09/29/the-most-pressing-issue-on-ethereum-is-validator-size-growth/}.
\newblock Accessed: 2023-12-14.

\bibitem{kim2023taxonomic}
Heesang Kim and Dohoon Kim.
\newblock A taxonomic hierarchy of blockchain consensus algorithms: An evolutionary phylogeny approach.
\newblock {\em Sensors}, 23(5):2739, 2023.

\bibitem{kim2019stellar}
Minjeong Kim, Yujin Kwon, and Yongdae Kim.
\newblock Is stellar as secure as you think?
\newblock In {\em 2019 IEEE European Symposium on Security and Privacy Workshops (EuroS\&PW)}, pages 377--385. IEEE, 2019.

\bibitem{kwon2019impossibility}
Yujin Kwon, Jian Liu, Minjeong Kim, Dawn Song, and Yongdae Kim.
\newblock Impossibility of full decentralization in permissionless blockchains.
\newblock In {\em Proceedings of the 1st ACM Conference on Advances in Financial Technologies}, pages 110--123, 2019.

\bibitem{lamport2019byzantine}
Leslie Lamport, Robert Shostak, and Marshall Pease.
\newblock The byzantine generals problem.
\newblock In {\em Concurrency: the works of leslie lamport}, pages 203--226. 2019.

\bibitem{lehar2021decentralized}
Alfred Lehar and Christine~A Parlour.
\newblock Decentralized exchanges.
\newblock {\em Available at SSRN 3905316}, 2021.

\bibitem{lewis2021does}
Andrew Lewis-Pye and Tim Roughgarden.
\newblock How does blockchain security dictate blockchain implementation?
\newblock In {\em Proceedings of the 2021 ACM SIGSAC Conference on Computer and Communications Security}, pages 1006--1019, 2021.

\bibitem{li2020comparison}
Chao Li and Balaji Palanisamy.
\newblock Comparison of decentralization in dpos and pow blockchains.
\newblock In {\em Blockchain--ICBC 2020: Third International Conference, Held as Part of the Services Conference Federation, SCF 2020, Honolulu, HI, USA, September 18-20, 2020, Proceedings 3}, pages 18--32. Springer, 2020.

\bibitem{li2023cross}
Chao Li, Balaji Palanisamy, Runhua Xu, and Li~Duan.
\newblock Cross-consensus measurement of individual-level decentralization in blockchains.
\newblock In {\em 2023 IEEE 9th Intl Conference on Big Data Security on Cloud (BigDataSecurity), IEEE Intl Conference on High Performance and Smart Computing,(HPSC) and IEEE Intl Conference on Intelligent Data and Security (IDS)}, pages 45--50. IEEE, 2023.

\bibitem{li2023hard}
Chao Li, Balaji Palanisamy, Runhua Xu, Li~Duan, Jiqiang Liu, and Wei Wang.
\newblock How hard is takeover in dpos blockchains? understanding the security of coin-based voting governance.
\newblock {\em arXiv preprint arXiv:2310.18596}, 2023.

\bibitem{li2023liquid}
Chao Li, Runhua Xu, and Li~Duan.
\newblock Liquid democracy in dpos blockchains.
\newblock In {\em Proceedings of the 5th ACM International Symposium on Blockchain and Secure Critical Infrastructure}, pages 25--33, 2023.

\bibitem{lin2021measuring}
Qinwei Lin, Chao Li, Xifeng Zhao, and Xianhai Chen.
\newblock Measuring decentralization in bitcoin and ethereum using multiple metrics and granularities.
\newblock In {\em 2021 IEEE 37th International Conference on Data Engineering Workshops (ICDEW)}, pages 80--87. IEEE, 2021.

\bibitem{liu2022decentralization}
Jieli Liu, Weilin Zheng, Dingyuan Lu, Jiajing Wu, and Zibin Zheng.
\newblock From decentralization to oligopoly: A data-driven analysis of decentralization evolution and voting behaviors on eosio.
\newblock {\em IEEE Transactions on Computational Social Systems}, 2022.

\bibitem{liu2022understanding}
Jieli Liu, Weilin Zheng, Dingyuan Lu, Jiajing Wu, and Zibin Zheng.
\newblock Understanding the decentralization of dpos: perspectives from data-driven analysis on eosio.
\newblock {\em arXiv preprint arXiv:2201.06187}, 2022.

\bibitem{mccorry2021sok}
Patrick McCorry, Chris Buckland, Bennet Yee, and Dawn Song.
\newblock Sok: Validating bridges as a scaling solution for blockchains.
\newblock {\em Cryptology ePrint Archive}, 2021.

\bibitem{messias2023understanding}
Johnnatan Messias, Vabuk Pahari, Balakrishnan Chandrasekaran, Krishna~P Gummadi, and Patrick Loiseau.
\newblock Understanding blockchain governance: Analyzing decentralized voting to amend defi smart contracts.
\newblock {\em arXiv preprint arXiv:2305.17655}, 2023.

\bibitem{messias2023airdrops}
Johnnatan Messias, Aviv Yaish, and Benjamin Livshits.
\newblock Airdrops: Giving money away is harder than it seems.
\newblock {\em arXiv preprint arXiv:2312.02752}, 2023.

\bibitem{miller2016honey}
Andrew Miller, Yu~Xia, Kyle Croman, Elaine Shi, and Dawn Song.
\newblock The honey badger of bft protocols.
\newblock In {\em Proceedings of the 2016 ACM SIGSAC conference on computer and communications security}, pages 31--42, 2016.

\bibitem{mivsic2023towards}
Jelena Mi{\v{s}}i{\'c}, Vojislav~B Mi{\v{s}}i{\'c}, and Xiaolin Chang.
\newblock Towards decentralization in dpos systems: election, voting and leader selection using virtual stake.
\newblock {\em IEEE Transactions on Network and Service Management}, 2023.

\bibitem{moniz2020istanbul}
Henrique Moniz.
\newblock The istanbul bft consensus algorithm.
\newblock {\em arXiv preprint arXiv:2002.03613}, 2020.

\bibitem{lidoArticle}
Nicholas Morgan.
\newblock Lido dominance prompts warnings about liquid staking derivatives.
\newblock \url{https://decrypt.co/154804/lido-lsd-liquid-staking-decentralization}.
\newblock Accessed: 2023-11-29.

\bibitem{motepalli2023sok}
Shashank Motepalli, Luciano Freitas, and Benjamin Livshits.
\newblock Sok: Decentralized sequencers for rollups.
\newblock {\em arXiv preprint arXiv:2310.03616}, 2023.

\bibitem{motepalli2021reward}
Shashank Motepalli and Hans-Arno Jacobsen.
\newblock Reward mechanism for blockchains using evolutionary game theory.
\newblock In {\em 2021 3rd Conference on Blockchain Research \& Applications for Innovative Networks and Services (BRAINS)}, pages 217--224. IEEE, 2021.

\bibitem{motepalli2022decentralizing}
Shashank Motepalli and Hans-Arno Jacobsen.
\newblock Decentralizing permissioned blockchain with delay towers.
\newblock {\em arXiv preprint arXiv:2203.09714}, 2022.

\bibitem{motepalli2023analyzing}
Shashank Motepalli and Hans-Arno Jacobsen.
\newblock Analyzing geospatial distribution in blockchains.
\newblock {\em arXiv preprint arXiv:2305.17771}, 2023.

\bibitem{nakamoto2008bitcoin}
Satoshi Nakamoto.
\newblock Bitcoin: A peer-to-peer electronic cash system.
\newblock {\em Decentralized business review}, 2008.

\bibitem{ongaro2014search}
Diego Ongaro and John Ousterhout.
\newblock In search of an understandable consensus algorithm.
\newblock In {\em 2014 USENIX annual technical conference (USENIX ATC 14)}, pages 305--319, 2014.

\bibitem{osmosis}
Osmosis.
\newblock \url{https://osmosis.zone/}.
\newblock Accessed: 2023-10-25.

\bibitem{pass2017analysis}
Rafael Pass, Lior Seeman, and Abhi Shelat.
\newblock Analysis of the blockchain protocol in asynchronous networks.
\newblock In {\em Annual international conference on the theory and applications of cryptographic techniques}, pages 643--673. Springer, 2017.

\bibitem{polygon}
Polygon.
\newblock \url{https://polygon.technology/}.
\newblock Accessed: 2023-10-25.

\bibitem{stakingrewards}
Staking Rewards.
\newblock Proof of stake.
\newblock \url{https://www.stakingrewards.com/assets/proof-of-stake}.
\newblock Accessed: 2023-12-03.

\bibitem{saad2020comparative}
Sheikh Munir~Skh Saad and Raja Zahilah Raja~Mohd Radzi.
\newblock Comparative review of the blockchain consensus algorithm between proof of stake (pos) and delegated proof of stake (dpos).
\newblock {\em International Journal of Innovative Computing}, 10(2), 2020.

\bibitem{schneider2003decentralization}
Aaron Schneider.
\newblock Decentralization: Conceptualization and measurement.
\newblock {\em Studies in comparative international development}, 38:32--56, 2003.

\bibitem{sharma2023unpacking}
Tanusree Sharma, Yujin Kwon, Kornrapat Pongmala, Henry Wang, Andrew Miller, Dawn Song, and Yang Wang.
\newblock Unpacking how decentralized autonomous organizations (daos) work in practice.
\newblock {\em arXiv preprint arXiv:2304.09822}, 2023.

\bibitem{siewiorek2005fault}
Daniel~P Siewiorek and Priya Narasimhan.
\newblock Fault-tolerant architectures for space and avionics applications.
\newblock {\em NASA Ames Research http://ic. arc. nasa. gov/projects/ishem/Papers/Siewi}, 2005.

\bibitem{sitthiyot2020simple}
Thitithep Sitthiyot and Kanyarat Holasut.
\newblock A simple method for measuring inequality.
\newblock {\em Palgrave Communications}, 6(1):1--9, 2020.

\bibitem{sitthiyot2021simple}
Thitithep Sitthiyot and Kanyarat Holasut.
\newblock A simple method for estimating the lorenz curve.
\newblock {\em Humanities and Social Sciences Communications}, 8(1):1--9, 2021.

\bibitem{spiegelman2022bullshark}
Alexander Spiegelman, Neil Giridharan, Alberto Sonnino, and Lefteris Kokoris-Kogias.
\newblock Bullshark: Dag bft protocols made practical.
\newblock In {\em Proceedings of the 2022 ACM SIGSAC Conference on Computer and Communications Security}, pages 2705--2718, 2022.

\bibitem{sridhar2023better}
Srivatsan Sridhar, Dionysis Zindros, and David Tse.
\newblock Better safe than sorry: Recovering after adversarial majority.
\newblock {\em arXiv preprint arXiv:2310.06338}, 2023.

\bibitem{balajidecentralization}
Balaji~S. Srinivasan and Leland Lee.
\newblock Quantifying decentralization.
\newblock \url{https://news.earn.com/quantifying-decentralization-e39db233c28e}.
\newblock Accessed: 2023-11-05.

\bibitem{sui}
Sui.
\newblock \url{https://sui.io/}.
\newblock Accessed: 2023-10-25.

\bibitem{tan2023open}
Joshua~Z Tan, Tara Merk, Sarah Hubbard, Eliza~R Oak, Joni Pirovich, Ellie Rennie, Rolf Hoefer, Michael Zargham, Jason Potts, Chris Berg, et~al.
\newblock Open problems in daos.
\newblock {\em arXiv preprint arXiv:2310.19201}, 2023.

\bibitem{cosmosBlockchain}
Chjango Unchained.
\newblock Tendermint explained — bringing bft-based pos to the public blockchain domain.
\newblock \url{https://blog.cosmos.network/tendermint-explained-bringing-bft-based-pos-to-the-public-blockchain-domain-f22e274a0fdb}.
\newblock Accessed: 2023-11-26.

\bibitem{censorshipData}
Toni Wahrstätter.
\newblock Ethereum censorship dashboard.
\newblock \url{https://censorship.pics/}.
\newblock Accessed: 2023-12-17.

\bibitem{wang2021ethereum}
Zeli Wang, Hai Jin, Weiqi Dai, Kim-Kwang~Raymond Choo, and Deqing Zou.
\newblock Ethereum smart contract security research: survey and future research opportunities.
\newblock {\em Frontiers of Computer Science}, 15:1--18, 2021.

\bibitem{wensley1978sift}
John~H Wensley, Leslie Lamport, Jack Goldberg, Milton~W Green, Karl~N Levitt, Po~Mo Melliar-Smith, Robert~E Shostak, and Charles~B Weinstock.
\newblock Sift: Design and analysis of a fault-tolerant computer for aircraft control.
\newblock {\em Proceedings of the IEEE}, 66(10):1240--1255, 1978.

\bibitem{polygonBlockchain}
Polygon wiki.
\newblock Peppermint.
\newblock \url{https://wiki.polygon.technology/docs/pos/design/heimdall/peppermint}.
\newblock Accessed: 2023-11-26.

\bibitem{worldcoin}
Worldcoin.
\newblock A new identity and financial network.
\newblock \url{https://whitepaper.worldcoin.org/#a-new-identity-and-financial-network}.
\newblock Accessed: 2023-12-16.

\bibitem{yin2019hotstuff}
Maofan Yin, Dahlia Malkhi, Michael~K Reiter, Guy~Golan Gueta, and Ittai Abraham.
\newblock Hotstuff: Bft consensus with linearity and responsiveness.
\newblock In {\em Proceedings of the 2019 ACM Symposium on Principles of Distributed Computing}, pages 347--356, 2019.

\bibitem{edwardThesis}
Gengrui Zhang.
\newblock {\em Towards More Efficient and Scalable Consensus Algorithms}.
\newblock PhD thesis, University of Toronto, 2023.

\bibitem{zhang2022reaching}
Gengrui Zhang, Fei Pan, Michael Dang'ana, Yunhao Mao, Shashank Motepalli, Shiquan Zhang, and Hans-Arno Jacobsen.
\newblock Reaching consensus in the byzantine empire: A comprehensive review of bft consensus algorithms.
\newblock {\em arXiv preprint arXiv:2204.03181}, 2022.

\bibitem{zhang2023prestigebft}
Gengrui Zhang, Fei Pan, Sofia Tijanic, and Hans-Arno Jacobsen.
\newblock Prestigebft: Revolutionizing view changes in bft consensus algorithms with reputation mechanisms.
\newblock {\em arXiv preprint arXiv:2307.08154}, 2023.

\end{thebibliography}
\end{document}